\theoremstyle{plain}
\newtheorem{theorem}{Theorem}
\newtheorem{lemma}[theorem]{Lemma}
\newtheorem*{lemma*}{Lemma}
\newtheorem*{conjecture*}{Conjecture}
\theoremstyle{definition}
\theoremstyle{remark}
\DeclareMathOperator{\supp}{supp}
\def\bq{\begin{eqnarray}}
\def\eq{\end{eqnarray}}
\def\bqq{\begin{eqnarray*}}
\def\eqq{\end{eqnarray*}}
\def\nn{\nonumber}
\def\epsilon{\varepsilon}
\newcommand{\norm}[1]{\left\lVert #1 \right\rVert}
\newcommand{\abs}[1]{\left\lvert#1\right\rvert}
\renewcommand{\phi}{\varphi}
\newcommand\1{{\ensuremath {\mathds 1} }}
\def\R{\mathbb{R}}
\def\wO{\widetilde{\Omega}}
\def\eps{\varepsilon}
\title[Lieb--Thirring inequality in the strong--coupling limit]{The Lieb--Thirring inequality for interacting systems in strong--coupling limit}
\author[K. K\"ogler]{Kevin K\"ogler}
\address{Department of Mathematics, LMU Munich, Theresienstrasse 39, 80333 Munich, Germany} 
\email{koegler.kevin@yahoo.com}
\author[P.T. Nam]{Phan Th\`anh Nam}
\address{Department of Mathematics, LMU Munich, Theresienstrasse 39, 80333 Munich, and Munich Center for Quantum Science and Technology (MCQST), Schellingstr. 4, 80799 Munich, Germany}
\email{nam@math.lmu.de}
\begin{document}

\begin{abstract}
	We consider an analogue of the Lieb--Thirring inequality for quantum systems with homogeneous repulsive interaction potentials, but without the antisymmetry assumption on the wave functions.
	We show that in the strong--coupling limit, the Lieb--Thirring constant converges to the optimal constant of the one-body Gagliardo--Nirenberg interpolation inequality without interaction. 
\end{abstract}
\maketitle

\section{Introduction}

The celebrated Lieb--Thirring inequality is a combination of the uncertainty and exclusion principles, two of the most important concepts in quantum mechanics. In the context of the kinetic energy of Fermi gases, it states that for any dimension $d\ge 1$,  the lower bound 
\begin{equation}\label{eq-LT}
\left\langle \Psi, \sum_{i=1}^N -\Delta_{x_i}   \Psi \right\rangle \geq \mathscr{K}_{\rm LT}(d) \int_{\R^d} \rho_\Psi^{1+\frac{2}{d}} (x) \,dx
\end{equation}
holds true for any wave functions $\Psi \in L^2(\R^{dN})$ that is normalized and anti-symmetric, namely
$\norm{\Psi}_{L^2(\R^{dN})}=1$ and 
\begin{equation} \label{eq-antisym}
\Psi(x_1,...,x_i,...,x_j,...,x_N)=-\Psi(x_1,...,x_j,...,x_i,...,x_N),\quad \forall i\neq j, \quad \forall x_i, x_j \in \R^d. 
\end{equation}
For any normalized wave function $\Psi \in L^2(\R^{dN})$, the function 
$$\rho_\Psi(x) := \sum_{j=1}^N \int_{\R^{(d-1)N}} \abs{\Psi(x_1,...,x_{j-1},x,x_{j+1},...,x_N)}^2 \prod_{i: i \neq j} \,dx_i$$
is called the one-body density of $\Psi$. We can interpret $\int_\Omega \rho_{\Psi}$ as the expected number of particles in $\Omega \subset \R^d$, in particular $\int_{\R^d}\rho_{\Psi}=N$ is the total number of particles. We ignore the spin of particles for simplicity. 

It is important that the constant $\mathscr{K}_{\rm LT}(d) >0$ in \eqref{eq-LT}  is independent of not only the wave function $\Psi$ but also the  particle number $N$. The inequality \eqref{eq-LT} was derived by Lieb and Thirring in 1975 as an essential tool in their proof of the stability of matter \cite{LiTh-75, LiTh-76}. Thanks to a standard duality argument, the kinetic bound \eqref{eq-LT} is equivalent to a lower bound on the sum of negative eigenvalues of Schr\"odinger operators $-\Delta +V(x)$ on $L^2(\R^d)$, making it very useful to semiclassical analysis (see \cite[Chapter 12]{LiLo-01} for a connection to Weyl's law).  In fact, up to a constant factor, the right side of \eqref{eq-LT} agrees with the Thomas--Fermi approximation for large $N$: 
$$
\left\langle \Psi, \sum_{i=1}^N -\Delta_{x_i}   \Psi \right\rangle \approx \mathscr{K}_{\rm cl} (d) \int_{\R^d} \rho_\Psi^{1+\frac{2}{d}}(x) \, dx 
$$
where 
$$
\mathscr{K}_{\rm cl}(d) := \frac{d}{d+2}\cdot \frac{4\pi^2}{|B(0,1)|^{2/d}}
$$
is called the semiclassical constant ($|B(0,1)|$ is the volume of the unit ball in $\R^d$). 

Note that the anti-symmetry condition \eqref{eq-antisym} is crucial for \eqref{eq-LT} to hold. Without Pauli's exclusion principle, the best bound one can get from the kinetic energy is 
\begin{equation}\label{eq-LT-bos}
\left\langle \Psi, \sum_{i=1}^N -\Delta_{x_i}   \Psi \right\rangle \geq \frac{C_{\rm GN}(d)}{N^{2/d}} \int_{\R^d} \rho_\Psi^{1+\frac{2}{d}} (x) \,dx
\end{equation}
where $C_{\rm GN}(d)$ is the sharp constant in the Gagliardo--Nirenberg interpolation inequality
\begin{equation}\label{eq-GN}
\Big( \int_{\R^d} |\nabla u(x)|^2 dx \Big) \Big( \int_{\R^d} |u(x)|^2 dx \Big)^{2/d} \ge  C_{\rm GN}(d) \int_{\R^d} |u(x)|^{2(1+2/d)} dx, \quad \forall u\in H^1(\R^d) 
\end{equation}
(see e.g. \cite{LuNaPo-16}). One can think of \eqref{eq-GN} as a quantitative version of the uncertainty principle. Clearly the lower bound \eqref{eq-LT-bos} is optimal when $N=1$, but it is not very useful when $N$ becomes large because the factor $N^{-2/d}$ on the right side becomes very small. The appearance of this  small factor is due to the fact that the particles are allowed to be stacked on top of each other, in which case the left side of  \eqref{eq-LT-bos} scales like $N$ while the integral on the right side scales like $N^{1+\frac{2}{d}}$. Intuitively, for an inequality similar to \eqref{eq-LT} to hold one needs some conditions to control such an overlapping of the particles, namely some version of the exclusion principle.

Computing the sharp constant $\mathscr{K}_{\rm LT}$ in the Lieb--Thirring inequality  \eqref{eq-LT} is an important open problem in mathematical physics. In \cite{LiTh-76}, Lieb and Thirring conjectured that 
$$
\mathscr{K}_{\rm LT}(d) = \min \{ \mathscr{K}_{\rm cl}(d) , C_{\rm GN} (d) \} = \begin{cases}
       \mathscr{K}_{\rm cl}(d) & \text{if $d\ge 3$,}\\
      C_{\rm GN} (d) & \text{if $d=1,2$.}
    \end{cases} 
$$
We refer to the recent work  \cite{FHJN-19} for the best known estimate on $\mathscr{K}_{\rm LT}(d)$. See also \cite{FGL-20} for a recent investigation on the conjectured  bounds on eigenvalues of Schr\"odinger operators. 

As regards to the semiclassical constant, it was proved in \cite{Nam-18}  that for all $d\ge 1$ and for all $\delta\in (0,1)$ one has
\begin{align} \label{eq:n-cl}
 \left\langle \Psi, \sum_{i=1}^N -\Delta_{x_i}   \Psi \right\rangle \ge (1-\delta)\mathscr{K}_{\rm cl}(d) \int_{\R^d} \rho_\Psi^{1+\frac{2}{d}} (x) \,dx - \frac{C_d}{\delta^{3+4/d}} \int_{\R^d} |\nabla \sqrt{\rho_\Psi}|^2
\end{align}
for any anti-symmetric normalized wave function $\Psi$ in $L^2(\R^{dN})$ and for any $N$. This bound implies the Lieb--Thirring inequality \eqref{eq-LT} with a non-sharp constant because the gradient term is bounded by the kinetic energy, thanks to the Hoffmann--Ostenhof inequality \cite{HO2-77}.  See \cite{LLS-20} for a related upper bound and the application in local density approximation, and see \cite{BVV18a,BVV18b} for discussions on related interpolation inequalities. 

In the present paper, we will focus on the one-body Gagliardo--Nirenberg constant and study its relation to a  Lieb--Thirring inequality with  repulsive interactions. In 2015, Lundholm, Portmann and Solovej \cite{LuPoSo-15} showed that for any dimension $d\ge 1$ and any constant $\lambda>0$, the Lieb--Thirring inequality 
\begin{equation} \label{eq-LT-potential-1}
\left\langle \Psi, \left(\sum_{i=1}^N -\Delta_{x_i}+ \sum_{1 \leq i<j \leq N} \frac{\lambda}{\abs{x_i-x_j}^2} \right)  \Psi \right\rangle 
\geq C_{\rm LT}(d,\lambda) \int_{\R^d} \rho_\Psi^{1+\frac{2}{d}} \,dx.
\end{equation}
holds true with a constant $C_{\rm LT}(d,\lambda)>0$ depending only on $d$ and $\lambda$. Remarkably, \eqref{eq-LT-potential-1}  holds true for any normalized wave function $\Psi$ in $L^2(\R^{dN})$, even without the anti-symmetry condition \eqref{eq-antisym}. Later, the bound \eqref{eq-LT-potential-1} was extended in \cite[Theorem 1]{LuNaPo-16} to the fractional case, namely for all $\lambda>0$ and $s>0$, one has the Lieb--Thirring inequality 
\begin{equation} \label{eq-LT-potential-s}
\left\langle \Psi, \left(\sum_{i=1}^N (-\Delta_{x_i})^s+ \sum_{1 \leq i<j \leq N} \frac{\lambda}{\abs{x_i-x_j}^{2s}} \right)  \Psi \right\rangle 
\geq C_{\rm LT}(s,d,\lambda) \int_{\R^d} \rho_\Psi^{1+\frac{2s}{d}} \,dx
\end{equation}
with a constant $C_{\rm LT}(s,d,\lambda)>0$. Here the power $2s$ in the interaction potential is the natural parameter such that the interaction energy and the kinetic energy scale the same under dilations.  In \cite{LuNaPo-16}, the authors also discussed briefly the behavior of the optimal constant $C_{\rm LT}(s,d,\lambda)$ in \eqref{eq-LT-potential-s}. They conjectured that in the strong--coupling limit $\lambda\to \infty$, the Lieb--Thirring constant  $C_{\rm LT}(s,d,\lambda)$ in \eqref{eq-LT-potential-s} converges to  the optimal  constant in the corresponding one-body Gagliardo--Nirenberg inequality. Heuristically, this conjecture is easy to understand because in the strong--coupling limit each particle is forced to stay away from the others and the many--body interacting system reduces to a one-body non-interacting system. However, proving this rigorously is nontrivial since we have to prove estimates uniformly in the number of particles. In the present paper, we will justify this
  conjecture rigorously. Moreover, in the case $2s<d$, we also obtain a similar result when the fractional Laplacian is replaced by the Hardy operator $(-\Delta)^{s} - \mathcal{C}_{s,d} |x|^{-2s}$. The precise statements of our results are presented in the next section.

To our knowledge,  there are very rare rigorous results connecting the Gagliardo--Nirenberg constant to Lieb--Thirring inequalities. In a remarkable paper in 2004 \cite{BL-00}, Benguria and Loss noticed that the Lieb--Thirring conjecture in the special case $d=1$ and $N=2$ is related to an open problem concerning a sharp  isoperimetric inequality for the lowest eigenvalue of a Schr\"odinger operator defined on a closed planar curve (see also \cite{BT-05,L-06} for related results). This is an illustration for the difficulty of the Lieb--Thirring conjecture when  the Gagliardo--Nirenberg constant is expected to emerge. We hope that our approach will give new insights to this challenging question.

\subsection*{Acknowledgment} We thank Simon Larson, Douglas Lundholm and Fabian Portmann for helpful discussions. We thank the referees for constructive comments and remarks. We received funding from the Deutsche Forschungsgemeinschaft (DFG, German Research Foundation) under Germany's Excellence Strategy (EXC-2111-390814868). 

\section{Main results} 

As usual, for any constant $s>0$ the operator $(-\Delta)^s$ on $L^2(\R^d)$ is defined via the Fourier transform
$$
  \widehat{(-\Delta)^s u}(p)= \abs{p}^{2s} \widehat{u}(p), \quad
  \widehat{u}(p):=\frac{1}{(2\pi)^{\frac{d}{2}}}\int_{\R^d}u(x)e^{-ip\cdot x}\,dx.
$$
The domain of $(-\Delta)^s$ is denoted by $H^s(\R^d)$ with the corresponding norm
$$
  \norm{u}_{H^s(\R^d)}^2:= \norm{u}_{L^2(\R^d)}^2+\norm{u}_{\dot{H}^s(\R^d)}^2 = \norm{u}_{L^2(\R^d)}^2 + \left\langle u,(-\Delta)^s u\right\rangle.
$$

\subsection{Lieb--Thirrring inequality with strong interactions}
Our first main result is 
\begin{theorem}[Lieb--Thirring constant in the strong--coupling limit] \label{thm-main1} Fix $d\ge 1$ and $s>0$. For any $\lambda>0$, let $C_{\rm LT}(s,d,\lambda)$ be the optimal constant in the  Lieb--Thirring inequality \eqref{eq-LT-potential-s}, namely
  \begin{align*}
	  C_{\rm LT}(s,d,\lambda) := \inf_{N\ge 2} \ \inf_{\substack{\Psi \in H^s(\R^{dN}) \\     
      \norm{\Psi}_{L^2}=1}} \frac{ \left\langle \Psi, \left( \sum\limits_{i=1}^N (-\Delta_{x_i})^s + \sum\limits_{1\le i<j\le N} \dfrac{ \lambda} { |x_i-x_j|^{2s}}  \right) \Psi \right\rangle}{ \int_{\R^d} \rho_{\Psi}^{1+\frac{2s}{d}} }.
  \end{align*}
Then we have
$$
\lim_{\lambda\to \infty} C_{\rm LT}(s,d,\lambda) = C_{\rm GN}(s,d)
$$
where 
\begin{align*}
	C_{\rm GN} (s,d):= \inf_{\substack{u \in H^s(\R^d)\\ 
			\norm{u}_{L^2} = 1}} \frac{\langle u, (-\Delta)^s u \rangle}{\int_{\R^d} |u|^{2 (1+\frac{2s}{d})}}.
  \end{align*}	
\end{theorem}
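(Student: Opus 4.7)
The plan is to prove the two bounds $\limsup_{\lambda\to\infty}C_{\rm LT}(s,d,\lambda)\le C_{\rm GN}(s,d)$ and $\liminf_{\lambda\to\infty}C_{\rm LT}(s,d,\lambda)\ge C_{\rm GN}(s,d)$ separately.

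\textbf{Upper bound.} I would first test the Lieb--Thirring infimum on a two-particle product state consisting of two well-separated copies of a near-optimizer of the one-body GN inequality. Explicitly, fix $u\in C_c^\infty(\R^d)$ with $\|u\|_{L^2}=1$ and $\supp u\subset B(0,a)$, and set
\[
\Psi_R(x_1,x_2):=u(x_1-Re)\,u(x_2+Re),\quad e\in\R^d\ \text{a unit vector},\quad R>2a.
\]
Since $\Psi_R$ is a product and $(-\Delta)^s$ acts variable-wise, the kinetic energy equals $2\langle u,(-\Delta)^s u\rangle$ exactly. The two translates are disjointly supported, hence $\rho_{\Psi_R}=|u(\cdot-Re)|^2+|u(\cdot+Re)|^2$ and $\int \rho_{\Psi_R}^{1+\frac{2s}{d}}=2\int|u|^{2(1+\frac{2s}{d})}$. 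The interaction energy is at most $\lambda(R-2a)^{-2s}$. Picking $R=R(\lambda)$ with $R\to\infty$ and $\lambda R^{-2s}\to 0$ (e.g.\ $R=\lambda^{1/s}$), the ratio tends to $\langle u,(-\Delta)^s u\rangle/\int|u|^{2(1+\frac{2s}{d})}$, and optimizing over $u$ (using density of $C_c^\infty$ in $H^s$) yields the claimed upper bound.

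\textbf{Lower bound.} This is the main difficulty. Fix a normalized $\Psi$ with any $N\ge 2$, and a scale $L=L(\lambda)\to 0$ with $\lambda L^{-2s}\to\infty$. The strategy combines localization with the exclusion effect of the interaction. \emph{Localization of the density:} choose a smooth partition $\{\chi_\alpha\}_{\alpha\in\Z^d}$ at scale $L$ with $\sum_\alpha\chi_\alpha^2\equiv 1$, each $\chi_\alpha$ supported in a ball $B_\alpha$ of radius $\sim L$, and let $v_\alpha:=\chi_\alpha\sqrt{\rho_\Psi}$, $n_\alpha:=\|v_\alpha\|_{L^2}^2$. With $p:=1+\tfrac{2s}{d}$, the pointwise power-mean inequality $1=(\sum\chi_\alpha^2)^p\le K^{p-1}\sum\chi_\alpha^{2p}$ (where $K$ bounds the overlap) together with the one-body GN inequality applied to each $v_\alpha$ yields
\[
\int\rho_\Psi^{p}\le K^{p-1}\,C_{\rm GN}(s,d)^{-1}\,\Bigl(\max_\alpha n_\alpha\Bigr)^{p-1}\sum_\alpha\langle v_\alpha,(-\Delta)^sv_\alpha\rangle.
\]
\emph{Kinetic energy:} Hoffmann--Ostenhof gives $\langle\Psi,\sum_i(-\Delta_i)^s\Psi\rangle\ge\langle\sqrt{\rho_\Psi},(-\Delta)^s\sqrt{\rho_\Psi}\rangle$, and an IMS-type commutator bound for $(-\Delta)^s$ provides $\sum_\alpha\langle v_\alpha,(-\Delta)^sv_\alpha\rangle\le\langle\sqrt{\rho_\Psi},(-\Delta)^s\sqrt{\rho_\Psi}\rangle+O(L^{-2s})N$. \emph{Interaction constrains $\max_\alpha n_\alpha$:} for each enlarged ball $\widetilde B_\alpha\supset B_\alpha$ of radius $cL$, bounding $|x_i-x_j|^{-2s}\ge(cL)^{-2s}\mathds 1_{x_i,x_j\in\widetilde B_\alpha}$ and using $\langle N_{\widetilde B_\alpha}(N_{\widetilde B_\alpha}-1)\rangle\ge n_\alpha(n_\alpha-1)$ (from $\langle N^2\rangle\ge\langle N\rangle^2$), one gets
\[
\lambda\Bigl\langle\Psi,\sum_{i<j}|x_i-x_j|^{-2s}\Psi\Bigr\rangle\ge c'\lambda L^{-2s}\sum_\alpha n_\alpha(n_\alpha-1).
\]
Splitting the kinetic energy into a piece $(1-\eps)$KE used to feed the GN step and a small piece $\eps$KE used, together with the interaction, to absorb the IMS error and to force $\max_\alpha n_\alpha\to 1$, one arrives at a bound of the form
\[
\Bigl\langle\Psi,\sum_i(-\Delta_i)^s\Psi+\lambda\sum_{i<j}|x_i-x_j|^{-2s}\Psi\Bigr\rangle\ge\bigl(C_{\rm GN}(s,d)-\eta(\lambda)\bigr)\int\rho_\Psi^{1+\frac{2s}{d}}
\]
with $\eta(\lambda)\to 0$ independent of $N$ and $\Psi$. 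Tuning $L=L(\lambda)$ then delivers the desired lim-inf.

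\textbf{Main obstacle.} The core difficulty is extracting the \emph{sharp} constant $C_{\rm GN}(s,d)$, not merely a positive one. The smooth partition forces $K>1$, hence a loss $K^{p-1}>1$ in the GN step, whereas sharpening the partition to recover $K\to 1$ blows up the IMS error in Step~2. Both losses have to be balanced against each other while the interaction is simultaneously used to enforce $\max_\alpha n_\alpha\to 1$, and all of this must be done uniformly in $N$. A natural implementation is to let both $L$ and the cutoff width $\delta L$ of the partition decay to zero with $\lambda$ in a coupled way, using the large coefficient $\lambda$ to absorb the $\delta$-dependent IMS blow-up.
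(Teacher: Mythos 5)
Your upper bound via a well-separated two-particle product state is correct and is essentially the argument the paper invokes from \cite[Proposition 10]{LuNaPo-16}. Your lower-bound outline, however, has several genuine obstacles that the paper's actual proof is designed precisely to circumvent, and they do not appear to be fixable within your framework.

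\textbf{(i) Fractional Hoffmann--Ostenhof fails for $s>1$.} You invoke $\langle\Psi,\sum_i(-\Delta_{x_i})^s\Psi\rangle\ge\langle\sqrt{\rho_\Psi},(-\Delta)^s\sqrt{\rho_\Psi}\rangle$. This is known only for $0<s\le 1$ (where it follows from $\bigl||u(x)|-|u(y)|\bigr|\le|u(x)-u(y)|$ in the Gagliardo seminorm and convexity of $\rho\mapsto\langle\sqrt\rho,(-\Delta)^s\sqrt\rho\rangle$). For $s>1$ it is false: already $\|\Delta u\|_{L^2}^2\ge\|\Delta|u|\|_{L^2}^2$ fails, e.g.\ for odd $u$ on $\R$, since $|u|$ acquires a kink. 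The theorem is claimed for all $s>0$. The paper avoids this entirely by diagonalizing the one-body density matrix $\gamma_\Psi=\sum_n|u_n\rangle\langle u_n|$, applying the localized GN bound to each $u_n$, and recombining via H\"older for sums --- no Hoffmann--Ostenhof step appears.

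\textbf{(ii) The overlap loss $K^{p-1}$ is not removable.} In a smooth partition $\sum_\alpha\chi_\alpha^2\equiv1$ the overlap count $K=\max_x\#\{\alpha:\chi_\alpha(x)\neq0\}$ is an integer at least $2$ (and $\ge 2^d$ for a cube-lattice partition), independently of the transition width $\delta L$. Your power-mean step $\int\rho_\Psi^p\le K^{p-1}\sum_\alpha\int v_\alpha^{2p}$ is sharp: if $\rho_\Psi$ concentrates in the overlap region where the $K$ active $\chi_\alpha^2$ are each $\approx1/K$, then $\sum_\alpha\int v_\alpha^{2p}\approx K^{1-p}\int\rho_\Psi^p$. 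Shrinking the transition width does not help, since $\Psi$ is arbitrary and may be chosen to concentrate exactly there. Thus a fixed factor $K^{p-1}>1$ is baked into this route, and you cannot reach the sharp $C_{\rm GN}$. The paper works with genuinely disjoint cubes, and localizes the kinetic energy via the operators $(-\Delta)^s_{|\wO}$ with $\Omega\subset\subset\wO$; disjointness of the enlarged sets $\wO_K$ is guaranteed by inserting a buffer of ``light'' cubes (those with $\int_Q\rho_\Psi\le\delta$) between clusters, so no overlap factor $K$ ever appears in the $L^p$ comparison.

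\textbf{(iii) A single length scale $L(\lambda)$ cannot be uniform in $\Psi$ and $N$.} Your IMS error is $O(L^{-2s})N$, and your exclusion lower bound $\lambda L^{-2s}\sum_\alpha n_\alpha(n_\alpha-1)$ vanishes when all $n_\alpha\le 1$ --- precisely the regime you need to end up in. For a flat, spread-out $\rho_\Psi$ with $N$ large one has $\int\rho_\Psi^p$ of order $N$ and kinetic energy of order $N^{1-2s/d}$, so the IMS error $O(L^{-2s}N)$ can dominate the right-hand side for any fixed $L$, with nothing from the interaction to absorb it. There is no choice of $L=L(\lambda)$ that works uniformly over $\Psi$ and $N$; the length scale must adapt to the density. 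This is exactly what the paper's multi-scale stopping-time construction does: it produces cubes of side lengths $\eps^n$ over all $n$, with the subdivision driven by the local mass of $\rho_\Psi$, splits the cubes at each level into the three families $G^{n,0},G^{n,1},G^{n,2}$, applies the sharp local uncertainty (Lemma~\ref{lem:LUP-II}) only on clusters of mass $<1+\delta$, applies the cruder bound (Lemma~\ref{lem:LUP}) on light cubes, and uses a multi-scale local exclusion bound (Lemma~\ref{lem-interaction-K2}) --- which allows overlapping sets across different scales --- to absorb every error term uniformly.

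In short, your upper bound is fine, but the lower bound needs a conceptually different mechanism on all three fronts: eigenfunction decomposition of $\gamma_\Psi$ instead of Hoffmann--Ostenhof, disjoint (not overlapping) localization with a separator buffer, and an adaptive multi-scale covering rather than a fixed scale $L(\lambda)$.
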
	

Remarks:

\begin{itemize}

\item[1.] For the general power $s>0$, the Lieb--Thirring inequality for fermions
\begin{equation}\label{eq-LT-s}
\left\langle \Psi, \sum_{i=1}^N (-\Delta_{x_i})^s   \Psi \right\rangle \geq \mathscr{K}_{\rm LT}(s,d) \int_{\R^d} \rho_\Psi^{1+\frac{2s}{d}} 
\end{equation}
with $\mathscr{K}_{\rm LT}(s,d)>0$ was proved by Daubechies in 1983 \cite{Daubechies-83}. The optimal constant $\mathscr{K}_{\rm LT}(s,d)$ is unknown; see  \cite[Theorem 2]{FHJN-19} for a recent estimate. Without the anti-symmetry condition \eqref{eq-antisym}, the best replacement for \eqref{eq-LT-s} is
\begin{equation}\label{eq-LT-bos-s}
\left\langle \Psi, \sum_{i=1}^N (-\Delta_{x_i})^s   \Psi \right\rangle \geq \frac{C_{\rm GN}(s,d)}{N^{2s/d}} \int_{\R^d} \rho_\Psi^{1+\frac{2s}{d}} 
\end{equation}
with the Gagliardo--Nirenberg constant  $C_{\rm GN}(s,d)$ given in Theorem \ref{thm-main1} (see e.g. \cite{LuNaPo-16}). In Theorem \ref{thm-main1}, we do not assume the  anti-symmetry condition \eqref{eq-antisym} but we save the factor $N^{2s/d}$ on the right side of \eqref{eq-LT-bos-s} provided that the interacting term is  sufficiently strong. 

\item[2.] The upper bound $C_{\rm LT}(s,d,\lambda)\le C_{\rm GN}(s,d)$ can be seen easily by putting each of the $N$ particles far from the  others; see \cite[Proposition 10]{LuNaPo-16} for details. The difficult direction of Theorem \ref{thm-main1} is the lower bound. Here in the definition of $C_{\rm LT}(s,d,\lambda)$  in Theorem \ref{thm-main1} we put $\inf_{N\ge 2}$, but the result remains the same if we put  $\inf_{N\ge m}$ for any $m\ge 1$ (for $m=1$ we recover $C_{\rm GN}(s,d)$ trivially as there is no interaction term).  In general, if we introduce the constant $C_{\rm LT} (s,d,\lambda,N)$, with obvious definition, then it is decreasing in $N$, and hence the infimum is always attained in the limit $N\to \infty$. 

\item[3.] Theorem \ref{thm-main1} establishes a conjecture in \cite{LuNaPo-16}. As also conjectured in \cite{LuNaPo-16} and proved recently in \cite{LaLuNa-19},  $\lim_{\lambda\to 0^+} C_{\rm LT}(s,d,\lambda)$ is nontrivial (i.e. strictly positive) if and only if $2s>d$.  Theorem \ref{thm-main1} thus completes the picture on the range of the Lieb--Thirring constant $C_{\rm LT}(s,d,\lambda)$ with $\lambda\in (0,\infty)$.

\item[4.] When $d=1$ and $s=1$, the interaction potential is so singular that the wave functions in its quadratic form domain must vanish on the diagonal set \cite{LuNaPo-16,LaLuNa-19}. Therefore, by the well-known bosonic-fermionic correspondence in one dimension \cite{Girardeau-60}, we obtain  $\lim_{\lambda\to 0^+} C_{\rm LT}(1,1,\lambda) = \mathscr{K}_{\rm LT}(1)$, the fermionic Lieb--Thirring constant in \eqref{eq-LT}. Thus given Theorem \ref{thm-main1}, in order to prove the Lieb-Thirring conjecture $\mathscr{K}_{\rm LT}(1)= C_{\rm GN}(1)$, it remains to show that the constant $C_{\rm LT}(1,1,\lambda)$ is independent of $\lambda$. Note that $C_{\rm LT}(s,d,\lambda)$ is always increasing in $\lambda$. 

\item[5.] One may ask about other types of interactions which would be sufficient for exclusion (e.g. a nearest-neighbor type interaction) as well as the $\lambda$ dependence of the constant $C_{\rm LT}(s,d,\lambda)$. In principle our method is constructive and could be adapted to address these issues, but we do not pursue these  directions. % to keep the representation transparent.  
\end{itemize}

\subsection{Hardy--Lieb--Thirring inequality  with strong interactions} Now we focus on the case $0<2s<d$, where we have the Hardy inequality 
$$
(-\Delta)^{s} - \mathcal{C}_{s,d} |x|^{-2s} \ge 0\quad \text{on }L^2(\R^d)
$$
with the sharp constant
$$
	\mathcal{C}_{s,d} := 2^{2s}\left( \frac{\Gamma((d+2s)/4)}{\Gamma((d-2s)/4)} \right)^2.
$$
The following improvement of \eqref{eq-LT-potential-s} has been proved in \cite[Theorem 2]{LuNaPo-16}
\begin{align}\label{eq:HLT_frac}
		\left\langle \Psi, \left(  \sum_{i=1}^N \left( (-\Delta_{i})^s - \frac{\mathcal{C}_{s,d}}{|x_i|^{2s}} \right) 
		+ \sum_{1\le i<j\le N} \frac{\lambda}{|x_i-x_j|^{2s}}\right) \Psi \right\rangle \ge 
		 C_{\rm HLT}(s,d,\lambda) \int_{\R^d} \rho_\Psi^{1+\frac{2s}{d}}. 
	\end{align}
	This inequality holds for any normalized wave function $\Psi\in L^2(\R^{dN})$ (without the anti-symmetry condition) and the constant $C_{\rm HLT}(s,d,\lambda)>0$ is independent of $N$ and $\Psi$. For $s=1/2$ and $d=3$, the left side of \eqref{eq:HLT_frac} can be interpreted 
as the energy of a system of $N$ relativistic quantum electrons moving around a classical nucleus fixed at the origin and interacting via Coulomb forces.

As explained in \cite[Theorem 4]{LuNaPo-16}, the  Hardy--Lieb--Thirring inequality \eqref{eq:HLT_frac} with a non-sharp constant is equivalent to the one-body interpolation inequality
	\begin{align} \label{eq:HLT-inter-1}
		\left\langle u, \Big((-\Delta)^s - \mathcal{C}_{s,d} |x|^{-2s} \Big)  u \right\rangle^{1-2s/d} 
		& \left( \iint_{\R^d \times \R^d} \frac{|u(x)|^2|u(y)|^2}{|x-y|^{2s}}\,dxdy \right)^{2s/d}  \\
		&\ge C(s,d) \int_{\R^d} |u(x)|^{2(1+2s/d)}\,dx, \quad \forall u\in H^s(\R^d). \nn
	\end{align}
A slightly weaker version of \eqref{eq:HLT-inter-1}, when the Hardy potential $- \mathcal{C}_{s,d} |x|^{-2s}$ is removed, has been proved by Bellazzini, Ozawa and Visciglia for the case $s=1/2,d=3$  \cite{BelOzaVis-11}, and by Bellazzini, Frank and Visciglia for the general case $0<s<d/2$ \cite{BelFraVis-14}. 

In the present paper we consider the asymptotic behavior of the optimal constant $C_{\rm HLT}(s,d,\lambda)$ in \eqref{eq:HLT_frac} when  $\lambda\to \infty$. Similarly to 
Theorem  \ref{thm-main1}, we have

\begin{theorem}[Hardy--Lieb--Thirring constant in the strong--coupling limit] \label{thm-main2}  Fix $0<2s<d$.  For any $\lambda>0$, let $C_{\rm HLT}(\lambda)$ be the optimal constant in the Hardy--Lieb--Thirring inequality \eqref{eq:HLT_frac}, namely
$$
C_{\rm HLT}(s,d,\lambda):=\inf_{N\ge 2}\ \inf_{\substack{ \Psi \in H^s(\R^{dN}) \\     
      \norm{\Psi}_{L^2}=1}} \frac{ \left\langle \Psi, \left( \sum\limits_{i=1}^N \Big( (-\Delta_{x_i})^s - \dfrac{\mathcal{C}_{s,d}}{|x_i|^{2s}} \Big) +  \sum\limits_{1\le i<j\le N} \dfrac{\lambda}{|x_i-x_j|^{2s}} \right) \Psi \right\rangle}{ \int_{\R^d} \rho_{\Psi}^{1+\frac{2s}{d}} }.
$$
Then we have
$$
\lim_{\lambda\to \infty} C_{\rm HLT}(s,d,\lambda) = C_{\rm HGN}(s,d)
$$
where 
\begin{align*}
	C_{\rm HGN}(s,d) := \inf_{\substack{u \in H^s(\R^d)\\ 
			\norm{u}_{L^2} = 1}} \frac{\Big\langle u, \Big((-\Delta)^s - \mathcal{C}_{s,d} |x|^{-2s} \Big) u \Big\rangle}{\int_{\R^d} |u|^{2 (1+\frac{2s}{d})}}.
  \end{align*}	
\end{theorem}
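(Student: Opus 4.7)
I will establish matching upper and lower bounds for $C_{\rm HLT}(s,d,\lambda)$ as $\lambda\to\infty$, paralleling the proof of Theorem \ref{thm-main1} but with adjustments to accommodate the attractive Hardy potential. The upper bound is a modest adaptation of the trial-state construction used for Theorem \ref{thm-main1}; the lower bound is the main task, and requires treating the origin as a distinguished point.

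\emph{Upper bound.} To obtain $\limsup_{\lambda\to\infty} C_{\rm HLT}(s,d,\lambda) \le C_{\rm HGN}(s,d)$, fix $\epsilon>0$ and take a normalized $\epsilon$-near minimizer $u \in H^s(\R^d)$ of the one-body HGN ratio. For each $\lambda$ and $N \ge 2$, consider the product trial state $\Psi = u(x_1) \prod_{j=2}^N v_{\alpha,a_j}(x_j)$, where $v_{\alpha,a}(x) = \alpha^{d/2} v(\alpha(x-a))$ is a dilated translate of a fixed Schwartz bump $v$ with $\norm{v}_{L^2}=1$. With $\alpha = \alpha(\lambda) \to 0$ and centers $|a_j|\to\infty$ chosen so that $|a_j|,|a_j - a_k| \gg \lambda^{1/(2s)}$, three things happen simultaneously: all pairwise interaction terms vanish by separation; the Hardy energies of the auxiliary particles vanish since $\int|x|^{-2s}|v_{\alpha,a_j}|^2 \le C|a_j|^{-2s}\to 0$; and the kinetic energies as well as $L^{2(1+2s/d)}$-masses of the auxiliary particles both scale like $\alpha^{2s}$, so they vanish together without distorting the ratio. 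The trial ratio therefore converges to the one-body HGN ratio evaluated at $u$, which is at most $C_{\rm HGN}(s,d)+\epsilon$, and letting $\epsilon\to 0$ yields the claim.

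\emph{Lower bound.} For $\liminf_{\lambda\to\infty}C_{\rm HLT}(s,d,\lambda) \ge C_{\rm HGN}(s,d)$, the plan is to adapt the argument behind Theorem \ref{thm-main1} with a splitting adapted to the origin. Fix $R>0$ and introduce a smooth partition of unity $1 = \chi_0^2 + \chi_\infty^2$ with $\chi_0$ supported in $B_R$ and $\chi_\infty$ supported in $B_R^c$, and decompose the total energy via an IMS-type formula for $(-\Delta)^s$. On the region attached to $\chi_\infty$ the Hardy potential is uniformly bounded by $\mathcal{C}_{s,d} R^{-2s}$, so the relevant energy is essentially kinetic plus interaction, and Theorem \ref{thm-main1} applied to the restricted system yields an asymptotic constant $C_{\rm GN}(s,d)\ge C_{\rm HGN}(s,d)$. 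On the region attached to $\chi_0$, the strong pairwise repulsion in $B_R$ forces the localized density to concentrate into an effective single-particle configuration as $\lambda\to\infty$, so the contribution is controlled by the one-body HGN inequality with sharp constant $C_{\rm HGN}(s,d)$. Summing the two contributions and sending first $\lambda\to\infty$ and then $R\to\infty$ gives the bound. The main obstacle is the sharp localization of the nonlocal operator $(-\Delta)^s$ in the presence of the singular weight $|x|^{-2s}$: any smooth cutoff produces nonlocal commutator errors, and these must be absorbed either by a small fraction of the kinetic energy or by the interaction without spoiling the sharp constant $C_{\rm HGN}(s,d)$. To retain sharpness in the limit one will likely need a density-adaptive partition in the spirit of the one used for Theorem \ref{thm-main1}, rather than the rigid partition by $B_R$ sketched above.
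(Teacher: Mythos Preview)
Your upper bound sketch is essentially the paper's argument: a product trial state with one factor near the origin and the remaining factors pushed to infinity and dilated so that their kinetic and $L^{2(1+2s/d)}$ contributions vanish together.

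The lower bound proposal, however, has a genuine gap --- one you yourself flag in the last sentence. The rigid $B_R/B_R^c$ split cannot work as written: on the outer region you invoke Theorem~\ref{thm-main1} ``applied to the restricted system'', but that theorem is stated for wave functions on $\R^{dN}$, not for localized ones, and localizing costs you exactly the sharp constant; on the inner region the assertion that the density becomes an ``effective single-particle configuration'' is a heuristic, not an estimate --- even if the expected number of particles in $B_R$ is close to $1$, passing from the many-body localized energy to the one-body HGN inequality with the \emph{sharp} constant $C_{\rm HGN}$ is precisely the difficulty. Moreover the IMS errors for $(-\Delta)^s$ across $\partial B_R$ are of order $R^{-2s}$ times the local mass, which is the same size as the Hardy tail you are trying to discard, so sending $R\to\infty$ last does not obviously help.

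The paper does \emph{not} separate a neighborhood of the origin from the rest. Instead it reruns the entire adaptive cube decomposition of Theorem~\ref{thm-main1}, with one structural tweak: the subdivision parameter $\eps=n_0^{-1}$ is chosen with $n_0$ \emph{odd}, which guarantees that at every scale the sub-cube containing $0$ is centered at $0$, while every other sub-cube $Q$ satisfies $\mathrm{dist}(0,Q)\ge \tfrac12|Q|^{1/d}$. For cubes (or clusters) not containing the origin, the Hardy term is then harmless, bounded by $C|Q|^{-2s/d}\int_Q\rho_\Psi$, and the Theorem~\ref{thm-main1} estimates apply verbatim. For the unique cube or cluster containing the origin at each scale, the paper uses a Hardy-adapted local uncertainty principle (Lemma~\ref{lem:LUP-H}), whose one-body form is proved via the improved Hardy inequality $(-\Delta)^s-\mathcal{C}_{s,d}|x|^{-2s}\ge \ell^{s-t}(-\Delta)^t-C\ell^s$ of Frank and Solovej--S{\o}rensen--Spitzer; this is what lets the sharp constant $C_{\rm HGN}$ emerge from the cluster estimate exactly as $C_{\rm GN}$ did in Theorem~\ref{thm-main1}. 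A further ``Uncertainty principle III'' step handles the interaction between the Hardy term on $G^{n,0}$-cubes and the slightly enlarged cluster neighborhoods $\widetilde\Omega_K$. The missing idea in your proposal is this centered-cube mechanism together with the Hardy-adapted local uncertainty lemma; without them there is no route to the sharp constant.
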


Remarks:

\begin{itemize}

\item[1.] The Hardy--Lieb--Thirring inequality for fermions
\begin{equation}\label{eq-HLT-s}
\left\langle \Psi, \sum_{i=1}^N \Big( (-\Delta_{x_i})^s - \mathcal{C}_{s,d} |x|^{-2s}  \Big)   \Psi \right\rangle \geq \mathscr{K}_{\rm HLT}(s,d) \int_{\R^d} \rho_\Psi^{1+\frac{2s}{d}} 
\end{equation}
with $\mathscr{K}_{\rm HLT}(s,d)>0$ was proved for the non-relativistic case $s=1$ by  Ekholm and Frank in 2006
\cite{EkhFra-06}, for the fractional powers $0<s\le 1$ by Frank, Lieb and Seiringer  in 2008 \cite{FraLieSei-08}, and for the full range $0<s<d/2$ by Frank in 2009 \cite{Frank-09}. The sharp constant $\mathscr{K}_{\rm HLT}(s)>0$ is unknown. 

\item[2.] The upper bound $C_{\rm HLT}(s,d,\lambda)\le C_{\rm HGN}(s,d)$ is easy to see by putting one particle close to the origin and putting $N-1$ particles at infinity such that each particle is far from the others.  The main point of Theorem \ref{thm-main2} is the lower bound. 

\item [3.] From our proof, it is possible to extract an explicit error estimate for the convergence $\lim_{\lambda\to \infty} C_{\rm HLT}(s,d,\lambda) = C_{\rm HGN}(s,d)$ in Theorem \ref{thm-main2}  (as well as the convergence $\lim_{\lambda\to \infty} C_{\rm LT}(s,d,\lambda) = C_{\rm GN}(s,d)$ in Theorem \ref{thm-main1}) in terms of $\lambda$. We will not do it here in order to keep the proof ideas more transparent. 
\end{itemize}

\subsection{Proof strategy} We will use the method of microlocal analysis. The idea goes back to the seminal work of Dyson and Lenard in 1967 \cite{DyLe-67,DyLe-68} where they proved the stability of matter using only a local formulation of the exclusion principle which is a relatively weak consequence of \eqref{eq-antisym}. In 2013, Lundholm and Solovej \cite{LuSo-13} found that one can actually obtain the Lieb-Thirring inequality \eqref{eq-LT} (with a non-sharp constant) by combining the local exclusion in \cite{DyLe-67,DyLe-68} with a local formulation of the uncertainty principle. They used this method to derive the Lieb--Thirring inequality for particles with fractional
statistics  in one and two dimensions \cite{LuSo-13,LuSo-13b,LuSo-14}.  Later, this method has been developed by many authors to derive several new Lieb--Thirring-type inequalities  \cite{FrSe-12,LuPoSo-15,LuNaPo-16,Nam-18,LaLu-18,LuSe-18,LaLuNa-19}. We refer to Lundholm's lecture notes \cite{Lun-18} for a pedagogical discussion.  All of the existing results are not concerned with the optimal constants, except the fermionic semiclassical bound \eqref{eq:n-cl}  in \cite{Nam-18}.  

In the present paper, we will revisit and improve the microlocal analysis for interacting systems developed in \cite{LuSo-13,LuPoSo-15,LuNaPo-16,LaLuNa-19}. We follow the overall strategy in \cite{LuNaPo-16}, by combining some local uncertainty and exclusion on an appropriate covering of the support of $\rho_{\Psi}$. In order to recover the sharp Gagliardo--Nirenberg constant in the strong--coupling limit, we need three new ingredients. 

\begin{itemize}

\item For an open bounded set $\Omega\subset \R^d$, consider the localized kinetic operator $(-\Delta)^s_{|\Omega}$ on $L^2(\R^d)$ defined by  
$$
\langle u, (-\Delta)^s_{|\Omega} u\rangle_{L^2(\R^d)} = \|u \|^2_{\dot H^s(\Omega)} 
$$
(see Section \ref{sec4} for details). The uncertainly principle in \cite[Lemma 8]{LuNaPo-16} states that 
\begin{equation} \label{eq-local-uncertainty-I-intro}
	\Big\langle \Psi, \sum_{i=1}^N (-\Delta_{x_i})^{s}_{|\Omega} \Psi \Big\rangle \ge \frac{1}{C(s,d)} \frac{\int_\Omega \rho_\Psi^{1+2s/d} }{\Big( \int_\Omega \rho_\Psi \Big)^{2s/d}}  - \frac{C(s,d)}{|\Omega|^{2s/d}} \int_\Omega \rho_\Psi. 
\end{equation}
When the mass $\int_\Omega \rho_\Psi$ is not small, the desired term $\int_{\Omega} \rho_\Psi^{1+2s/d}$ in \eqref{eq-local-uncertainty-I-intro}  will be coupled with a small factor. In the present paper, we improve this  by  making the optimal constant $C_{\rm GN}(s,d)$ appear explicitly. Roughly speaking, in  Lemma \ref{lem:LUP-II} we prove  that for bounded domains $\Omega \subset \subset \wO \subset \R^d$,  which up to translation and dilation belong to a finite collection of sets,  and for any $\delta>0$ small, 
\begin{equation} \label{eq-local-uncertainty-II-intro}
	\Big\langle \Psi, \sum_{i=1}^N (-\Delta_{x_i})^{s}_{|\wO} \Psi \Big\rangle \ge C_{\rm GN}(s,d)  (1-\delta)\frac{\int_\Omega \rho_\Psi^{1+2s/d} }{\Big( \int_{\wO} \rho_\Psi \Big)^{2s/d}}  - \frac{C(s,d,\delta)}{|\Omega|^{2s/d}} \int_{\wO}\rho_\Psi. 
	\end{equation}
	This bound  is useful when the mass $\int_{\wO} \rho_\Psi$ is smaller than $1+\delta$. 
	
\item In order to control the error in the local uncertainty principle, namely the last term of \eqref{eq-local-uncertainty-II-intro}, we need to use the interaction energy. A lower bound for the interaction energy in cubes is given in \cite[Lemma 6]{LuNaPo-16} (see also \cite[Theorem 2]{LuPoSo-15}). In Lemma \ref{lem-interaction-K2}, we prove a refined version of the local exclusion principle which allows the flexibility of the diameter of the sets; i.e. the bound is good for not only cubes, but also for ``clusters of cubes". More precisely, we prove that if $\{\Omega_{n,m}\}_{n,m\ge 1}$ is a collection of sets in $\R^d$ such that 
$$
{\rm diam}(\Omega_{n,m}) \le \eps^n, \quad \int_{\Omega_{n,m}} \rho_\Psi \ge 1+\delta, \quad \sum_{m} \1_{\Omega_{n,m}} \le M
$$
with fixed parameters $\eps,\delta \in (0,1)$ and $M>0$, then 
 \begin{equation} \label{eq-local-exclusion-intro}
	\Big\langle \Psi, \sum_{1\le i<j \le N} \frac{1}{|x_i-x_j|^{2s}} \Psi \Big\rangle \ge \sum_{n\ge 1} \frac{1}{C(s,\eps,\delta,M) \eps^{2sn}} \sum_{m\ge 1} \int_{\Omega_{n,m}}  \rho_\Psi. 
	\end{equation}
	Heuristically, it is clear that we can extract a nontrivial contribution from the interaction energy in a set with a small diameter (i.e. ${\rm diam}(\Omega_{n,m}) \le \eps^n$) if the local mass is large enough (i.e. $\int_{\Omega_{n,m}} \rho_\Psi \ge 1+\delta$). The significance of \eqref{eq-local-exclusion-intro} is that we can count the interaction contribution from {\em all} sets of different length scales,  provided that the sets of each length scale do not overlap too much (i.e. $\sum_{m} \1_{\Omega_{n,m}} \le M$). In this way, we allow a huge overlap from the sets of different length scales (i.e. $\sum_{n,m} \1_{\Omega_{n,m}}$ can be arbitrarily large), which is important in application. The smallness factor of $1/C(s,\eps,\delta,M)$ will be compensated by the large coupling constant $\lambda$ in the interaction potential.  
		 
\item Most importantly,  we introduce a new construction of covering sub-cubes for the support of $\rho_\Psi$, which is  very flexible and hopefully will be useful in other contexts. In  \cite{LuNaPo-16}, the support of $\rho_{\Psi}$ is covered by disjoint cubes which are obtained by a standard {\em stopping time argument}: any cube $Q$ with the mass $\int_Q \rho_\Psi$ bigger than a given quantity  will be divided into $2^d$ sub-cubes. In this way, the masses in final sub-cubes may differ up to a factor $2^d$, leading to a similar factor loss in the Lieb--Thirring constant. In the present paper, in order to have access to the optimal constant $C_{\rm GN}(s,d)$, we apply the stopping time argument to ``clusters of cubes" rather than to individual cubes. More precisely, by induction, in the $n$-th step we obtain a collection $G^{n}$ of cubes of side length $\eps^{n}$ with a fixed parameter $\eps \in (0,1)$, which can be decomposed further into three disjoint sub-collections 
  $$G^{n}=G^{n,0}\bigcup G^{n,1}\bigcup G^{n,2}.$$
Heuristically, $G^{n,0} \bigcup G^{n,1}$ contains ``good sets" concerning the uncertainty principle. More precisely, $G^{n,0}$ contains the cubes such that the mass in each cube is less than $\delta$, making \eqref{eq-local-uncertainty-I-intro} useful. Moreover, $G^{n,1}$ contains the cubes that can be distributed to ``disjoint clusters" such that the mass in each cluster is smaller than $1+\delta$, making \eqref{eq-local-uncertainty-II-intro} useful. Technically, thanks to the removal of the  cubes in $G^{n,0}$, all clusters in $G^{n,1}$, up to translation and dilation, must belong to a finite collections of sets which is important to apply Lemma \ref{lem:LUP-II}. On the other hand, $G^{n,2}$ contains disjoint clusters such that the mass in each cluster is bigger than $1+\delta$, making the exclusion principle in \eqref{eq-local-exclusion-intro} useful. Finally, in order to estimate the kinetic energy of the cubes in $G^{n,0}$, we have to divide them further and obtain a collection $G^{n+1}$ of cubes of side length $\eps^{n+1}$. Since the cubes in $G^{n,0}$ cover the cubes in $G^{n+1,0}\bigcup G^{n+1,1}$, the interaction energy from $G^{n,0}$ can be used to compensate for the error resulted from applying the uncertainty principle to $G^{n+1,0}\bigcup G^{n+1,1}$.

%In general, a cluster of cubes may be a long chain of cubes, making the refined local uncertainty principle useful. Moreover, by our construction there is no mass concentration on the boundary of the clusters, making the local uncertainty principle \eqref{eq-local-uncertainty-II-intro} applicable. 
%

\end{itemize}

The paper is structured as follows. We discuss the local uncertainty principle in Section \ref{sec4}  and the local exclusion principle in Section \ref{sec5}. Then in Section  \ref{sec6} we explain the construction of covering sub-cubes and prove Theorem \ref{thm-main1}. In Section  \ref{sec7} we provide the proof of Theorem \ref{thm-main2}, which follows the same overall approach of Theorem \ref{thm-main1} but the detailed analysis is more complicated because we have to deal with the singularity of the negative external potential. 

In the rest of the paper we will denote by $C$ a general large constant whose value may change from line to line. In some cases, the dependence on a given parameter will be noted, e.g. $C_\delta$ depends on $\delta$. We will often ignore the dependence of the dimension $d$ and the power $s$ to simplify the notation (e.g. we will simply write $C_{\rm LT}(\lambda)$ and $C_{\rm GN}$ for the constants $C_{\rm LT}(s,d,\lambda)$ and $C_{\rm GN}(s,d)$  in Theorem \ref{thm-main1}). On the other hand, it is important that all constants are always independent of the wave function $\Psi$ and the number of particles $N$.

\section{Local uncertainty} \label{sec4}

In this section we discuss Gagliardo--Nirenberg inequalities on bounded domains.

Let us recall the definition of the fractional Sobolev space; classical references are \cite{Ad-75,Tr-78}. For any power $s>0$, we write $s=m +\sigma$ with $m\in \mathbb{N}$ and $0 \le \sigma <1$. Then by working in the Fourier space it is straightforward to check that 
$$
\left\langle u,(-\Delta)^s u\right\rangle
=\sum_{\abs{\alpha}=m}\frac{m!}{\alpha!} \left\langle D^\alpha u,(-\Delta)^\sigma D^\alpha u\right\rangle.
$$
Here for $\alpha=(\alpha_1,...,\alpha_d)\in \{0,1,2,...\}^d$ and $x=(x^{(1)},...,x^{(d)}) \in \R^d$ we denoted 
$$
\alpha! = \alpha_1! \alpha_2! ... \alpha_d!  , \quad D^\alpha= \partial_{x^{(1)}}^{\alpha_1} \partial_{x^{(2)}}^{\alpha_2}  ... \partial_{x^{(d)}}^{\alpha_d}.
$$
If $\sigma=0$, then $(-\Delta)^{\sigma}=\1$ (the identity). Moreover, for  $0 < \sigma <1$ we have the well-known identity (see e.g. \cite[Lemma 3.1]{FraLieSei-08}) 
$$\left\langle u,(-\Delta)^\sigma u\right\rangle= c_{d,\sigma} \int_{\R^d}\int_{\R^d} \frac{\abs{u(x)-u(y)}^2}{\abs{x-y}^{d+2\sigma}}\,dxdy,\quad c_{d,\sigma}:=\frac{2^{2\sigma-1}}{\pi^{d/2}} \frac{\Gamma(d/2+\sigma)}{|\Gamma(-\sigma)|},$$
and hence 
$$\left\langle u,(-\Delta)^s u\right\rangle
= c_{d,\sigma} \sum_{\abs{\alpha}=m}\frac{m!}{\alpha!} \int_{\R^d}\int_{\R^d} \frac{\abs{D^\alpha u(x)-D^\alpha u(y)}^2}{\abs{x-y}^{d+2\sigma}}\,dxdy$$
For a domain $\Omega \subset \R^d$ we introduce the seminorm $\| \cdot\|_{\dot H^s(\Omega)}$ by
$$\norm{u}_{\dot{H}^s(\Omega)}^2=\sum_{\abs{\alpha}=m}\frac{m!}{\alpha!} \int_\Omega \abs{D^\alpha u}^2\,dx, \quad \text{ if $\sigma=0$ (i.e. $s=m$)}$$
and
$$\norm{u}_{\dot{H}^s(\Omega)}^2= c_{d,\sigma}\sum_{\abs{\alpha}=m}\frac{m!}{\alpha!}  \int_{\Omega}\int_{\Omega} \frac{\abs{D^\alpha u(x)-D^\alpha u(y)}^2}{\abs{x-y}^{d+2\sigma}}\,dxdy, \quad \text{ if $0< \sigma <1$}.$$
We define the operator $(-\Delta)^s_{|\Omega}$ on $L^2(\R^d)$ via the quadratic form formula
$$
\langle u, (-\Delta)^s_{|\Omega} u\rangle_{L^2(\R^d)} = \|u \|^2_{\dot H^s(\Omega)}, \quad \forall u\in H^s(\R^d) 
$$
and Friedrichs' extension. Note that $\|u \|^2_{\dot H^s(\Omega)}$ depends only on $u_{|\Omega}$, and hence we can also restrict $(-\Delta)^s_{|\Omega}$ to $L^2(\Omega)$ using the same quadratic form formula. (The reason we want to think of $(-\Delta)^s_{|\Omega}$ as an operator on $L^2(\R^d)$ is that later we can write $\langle \Psi, (-\Delta_{x_i})^s_{|\Omega} \Psi\rangle$ for wave functions $\Psi$ in $L^2(\R^{dN})$.) We denote by  $H^s(\Omega)$ the space of all functions $u: \Omega \to \mathbb{C}$ such that the norm
$$\norm{u}_{H^s(\Omega)}^2:= \norm{u}_{\dot{H}^s(\Omega)}^2 + \sum_{\abs{\alpha}\le m}\int_\Omega \abs{D^\alpha u}^2\,dx$$
is finite. Note that for disjoint domains $\{\Omega_i\}_i$ of $\R^d$ we have  the monotonicity 
$$\sum_i \norm{u}_{\dot{H}^s(\Omega_i)}^2 \leq  \norm{u}_{\dot{H}^s({\bigcup}\Omega_i)}^2.$$

Recall the Gagliardo--Nirenberg inequality \eqref{eq-LT-bos-s}: for any normalized wave function $\Psi\in L^2(\R^{dN})$ we have
$$
\left\langle \Psi, \sum_{i=1}^N (-\Delta_{x_i})^s   \Psi \right\rangle \geq \frac{C_{\rm GN}}{N^{2s/d}} \int_{\R^d} \rho_\Psi^{1+\frac{2s}{d}} 
$$
where $C_{\rm GN}$ is the optimal constant in the one-body case
\begin{align*}
	C_{\rm GN} := \inf_{\substack{u \in H^s(\R^d)\\ 
			\norm{u}_{L^2} = 1}} \frac{\langle u, (-\Delta)^s u \rangle}{\int_{\R^d} |u|^{2 (1+\frac{2s}{d})}}.
  \end{align*}
The bound \eqref{eq-LT-bos-s} is not very useful when $N$ becomes large. However, we can derive its local versions which are more powerful. Let us recall a key estimate from \cite[Proof of Lemma 8]{LuNaPo-16}. 

\begin{lemma}[Local uncertainty principle I] \label{lem:LUP} Let $d \geq 1$, $s>0$. Let $\Psi$ be a normalized wave function in $L^2(\R^{dN})$. Then for any cube $Q \subset \R^d$ we have
	$$
	\Big\langle \Psi, \sum_{i=1}^N (-\Delta_{x_i})^{s}_{|Q} \Psi \Big\rangle \ge \frac{1}{C} \frac{\int_Q \rho_\Psi^{1+2s/d} }{\Big( \int_Q \rho_\Psi \Big)^{2s/d}}  - \frac{C}{|Q|^{2s/d}} \int_Q \rho_\Psi. 
	$$
	Here the constant $C=C(d,s)>0$ is independent of $Q, \Psi$ and $N$. 
\end{lemma}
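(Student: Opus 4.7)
The plan is to reduce the many-body kinetic energy on the left-hand side to a one-body Gagliardo--Nirenberg estimate evaluated at $u=\sqrt{\rho_\Psi}$, by combining two ingredients. The first is a localized fractional Hoffmann--Ostenhof inequality
\begin{equation*}
\sum_{i=1}^N \big\langle \Psi, (-\Delta_{x_i})^s_{|Q} \Psi\big\rangle \ge \|\sqrt{\rho_\Psi}\|_{\dot H^s(Q)}^2,
\end{equation*}
and the second is a localized one-body Gagliardo--Nirenberg inequality on the cube,
\begin{equation*}
\int_Q |u|^{2(1+2s/d)}\,dx \le C\bigl(\|u\|_{\dot H^s(Q)}^2 + |Q|^{-2s/d}\|u\|_{L^2(Q)}^2\bigr)\|u\|_{L^2(Q)}^{4s/d}.
\end{equation*}
Substituting $u=\sqrt{\rho_\Psi}$ into the second estimate, using $\|u\|_{L^2(Q)}^2=\int_Q \rho_\Psi$ and $\int_Q u^{2(1+2s/d)}=\int_Q \rho_\Psi^{1+2s/d}$, then applying the first estimate and dividing through by $\bigl(\int_Q \rho_\Psi\bigr)^{2s/d}$ yields exactly the claimed inequality.

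For the Hoffmann--Ostenhof step, in the classical case $s=1$ I would write $\nabla\rho_\Psi(x)=2\sum_i \mathrm{Re}\int \bar\Psi\,\nabla_{x_i}\Psi\,d\hat x_i\big|_{x_i=x}$ and apply pointwise Cauchy--Schwarz to obtain $|\nabla\sqrt{\rho_\Psi}|^2\le \sum_i \int |\nabla_{x_i}\Psi|^2\,d\hat x_i$. For $0<s<1$ I would use the Gagliardo double-integral representation of $\|\cdot\|_{\dot H^s(Q)}^2$ recorded in the paper together with the pointwise bound $|\sqrt{\rho_\Psi(x)}-\sqrt{\rho_\Psi(y)}|^2 \le \sum_i \int |\Psi(\dots,x,\dots)-\Psi(\dots,y,\dots)|^2\,d\hat x_i$, which is simply the triangle inequality for the vector $(\Psi(\dots,\cdot,\dots))_{i}$ in $\bigoplus_i L^2(d\hat x_i)$. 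For the localized Gagliardo--Nirenberg step, I would first prove the inequality on the unit cube $Q_1$: because $Q_1$ is a Lipschitz domain there is a bounded extension $E\colon H^s(Q_1)\to H^s(\R^d)$, and applying the full-space fractional Gagliardo--Nirenberg inequality to $Eu$ together with $\|Eu\|_{H^s(\R^d)}^2\le C\bigl(\|u\|_{\dot H^s(Q_1)}^2+\|u\|_{L^2(Q_1)}^2\bigr)$ yields the unit-cube version with the additive $\|u\|_{L^2(Q_1)}^2$ term that is unavoidable since constants lie in the kernel of the $\dot H^s$ seminorm on a bounded set. A scaling argument $u(x)\mapsto u(\ell x)$ with $\ell=|Q|^{1/d}$ then transports the bound to an arbitrary cube, producing the factor $|Q|^{-2s/d}$ in front of the lower-order term that is forced by the common homogeneity of both sides.

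The main obstacle I expect is the Hoffmann--Ostenhof step for $s>1$, since the naive identity for $\nabla\sqrt{\rho_\Psi}$ is not available at higher order: derivatives $D^\alpha\sqrt{\rho_\Psi}$ involve negative powers of $\sqrt{\rho_\Psi}$ and do not split cleanly into a sum over particles. I would handle this by exploiting the decomposition $s=m+\sigma$ recalled at the start of this section: the Gagliardo seminorm $\|\cdot\|_{\dot H^s(Q)}^2$ reduces to the fractional case of order $\sigma$ applied to each $|\alpha|=m$ derivative of the wave function, and one can run the $\sigma$-case Hoffmann--Ostenhof argument on the density associated to $D^\alpha\Psi$ rather than to $\Psi$. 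Alternatively, one can bypass HO entirely by appealing to a trace-type formulation of the fractional Laplacian in the spirit of \cite{FraLieSei-08}. Once this step is in place, the remaining combination with the localized Gagliardo--Nirenberg inequality is purely algebraic and produces the bound in the stated form.
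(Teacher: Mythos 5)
Your approach is genuinely different from the paper's. The paper (following the original proof in \cite[Lemma 8]{LuNaPo-16}) reduces the many-body left-hand side to a one-body statement by decomposing the one-body density matrix $\gamma_\Psi=\sum_n |u_n\rangle\langle u_n|$ with $\rho_\Psi=\sum_n|u_n|^2$, applying a cube Gagliardo--Nirenberg estimate to each eigenfunction $u_n$, and reassembling the pieces via the triangle inequality in $L^{1+2s/d}$ and H\"older's inequality for sums -- the same scheme you can see spelled out in the paper's proof of Lemma \ref{lem:LUP-II}. You instead reduce to a single one-body function $\sqrt{\rho_\Psi}$ via a fractional Hoffmann--Ostenhof inequality.

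For $0<s\le 1$ your route is correct. The pointwise bound $|\sqrt{\rho_\Psi(x)}-\sqrt{\rho_\Psi(y)}|^2 \le \sum_i\int|\Psi(\dots,x,\dots)-\Psi(\dots,y,\dots)|^2\,d\hat x_i$ is exactly the reverse triangle inequality in $\bigoplus_i L^2(d\hat x_i)$, and feeding it into the Gagliardo double-integral representation of $\|\cdot\|_{\dot H^s(Q)}^2$ gives the localized fractional Hoffmann--Ostenhof inequality; the extension-and-scaling derivation of the cube GN inequality is also sound.

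The gap is for $s>1$. The Hoffmann--Ostenhof inequality fails there already at the one-body level: with $N=1$, $d=1$, $Q=(-1,1)$ and $\Psi(x)=xe^{-x^2}$ one has $\sqrt{\rho_\Psi}=|\Psi|$, which has a kink at $0$, so $\|\sqrt{\rho_\Psi}\|_{\dot H^s(Q)}=+\infty$ for $s\ge 3/2$ while $\langle\Psi,(-\Delta)^s_{|Q}\Psi\rangle<\infty$; the asserted inequality is then vacuous and the GN step for $\sqrt{\rho_\Psi}$ cannot even be formulated. Your first proposed workaround -- running the $\sigma$-order HO argument on the densities $\rho_\alpha:=\sum_j\int|D^\alpha_{x_j}\Psi|^2\,d\hat x_j$ for $|\alpha|=m$ -- produces a lower bound of the form $\sum_{|\alpha|=m}\frac{m!}{\alpha!}\|\sqrt{\rho_\alpha}\|^2_{\dot H^\sigma(Q)}$, but $\sqrt{\rho_\alpha}$ is not $D^\alpha\sqrt{\rho_\Psi}$, so this does not control $\|\sqrt{\rho_\Psi}\|_{\dot H^s(Q)}^2$ and is disconnected from the target quantity $\int_Q\rho_\Psi^{1+2s/d}$; the argument does not close. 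Your second alternative (a ``trace-type formulation'') is not developed enough to judge. The density-matrix decomposition avoids all of this because each $u_n$ is a genuine one-body function to which the cube GN inequality applies directly, and no smoothness of $\sqrt{\rho_\Psi}$ is ever needed.
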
	

In  \cite{LuNaPo-16}, the bound in Lemma \ref{lem:LUP} is used for the cubes $Q$ such that $\int_Q \rho_\Psi$ is bounded independently of $Q$ and $N$, leading to estimates uniform in $N$.   In the present paper, we need a refined version of Lemma \ref{lem:LUP} which gives access to the optimal Gagliardo-Nirenberg constant $C_{\rm GN}$. 

We will state our results here for a general {\em $s$-extension domain} $\Omega$, namely an open subset of $\R^d$ such that there exists a linear operator $T$ mapping functions defined a.e. in $\Omega$ to functions defined a.e. in $\R^d$ satisfying that for all $0\le t \le s$,  
$$T: H^t(\Omega)\to H^t(\R^d) \text{ is a bounded linear operator}, \quad Tu_{|\Omega}=u, \quad \forall u \in H^t(\Omega).$$
For our application, a cube or a finite union of  connected cubes is a $s$-extension domain for all $s>0$ (see e.g.  \cite[Theorem 7.41]{Ad-75} or  \cite[Theorem 4.2.3]{Tr-78}). Our new result is

\begin{lemma}[Local uncertainty principle II] \label{lem:LUP-II} Let $d \geq 1$, $s>0$. Consider two domains $\Omega \subset\subset \widetilde \Omega \subset \R^d$ where $\wO$ is a $s$-extension domain. Then for any normalized wave function $\Psi$  in $L^2(\R^{dN})$  and any constant $\delta\in (0,1)$ we have
	$$
	\Big\langle \Psi, \sum_{i=1}^N (-\Delta_{x_i})^{s}_{|\wO} \Psi \Big\rangle \ge C_{\rm GN} (1-\delta)\frac{\int_\Omega \rho_\Psi^{1+2s/d} }{\Big( \int_{\wO} \rho_\Psi \Big)^{2s/d}}  - C_{\delta,\Omega,\wO} \int_{\wO}\rho_\Psi. 
	$$
	The constant $C_{\delta,\Omega,\wO}>0$ is independent of $\Psi$ and $N$. Moreover, it scales as 
	$$
	C_{\delta,\Omega,\wO}= C_{\delta,L\Omega,L\wO} L^{2s}, \quad \forall L>0. 
	$$
\end{lemma}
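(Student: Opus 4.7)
The plan is to reduce the many-body kinetic term to a sharp one-body Gagliardo--Nirenberg inequality applied to $\sqrt{\rho_\Psi}$, after a smooth cutoff, in such a way that the sharp constant $C_{\rm GN}$ survives. First, I would invoke the Hoffmann--Ostenhof inequality adapted to $\wO$:
$$
\Big\langle\Psi,\sum_{i=1}^N(-\Delta_{x_i})^s_{|\wO}\Psi\Big\rangle \ \ge\ \|\sqrt{\rho_\Psi}\|_{\dot H^s(\wO)}^2.
$$
For integer $s=m$ this is the classical coordinate-wise Hoffmann--Ostenhof bound \cite{HO2-77} with the spatial integration restricted to $\wO$. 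For $s=m+\sigma$ with $\sigma\in(0,1)$, the double-integral representation of the seminorm reduces the estimate to the pointwise inequality
$$
\bigl|\sqrt{\rho_\Psi(x)}-\sqrt{\rho_\Psi(y)}\bigr|^2\ \le\ \sum_{i=1}^N\int_{\R^{d(N-1)}} \bigl|\Psi(\ldots,x,\ldots)-\Psi(\ldots,y,\ldots)\bigr|^2\,\prod_{j\ne i}dx_j,
$$
which follows from the single-coordinate Minkowski inequality together with the elementary bound $|\sqrt{\sum_i a_i}-\sqrt{\sum_i b_i}|^2\le\sum_i|\sqrt{a_i}-\sqrt{b_i}|^2$.

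Next, I would fix a smooth cutoff $\chi\in C_c^\infty(\wO)$ with $0\le\chi\le 1$ and $\chi\equiv 1$ on a neighborhood of $\overline\Omega$ (which exists since $\Omega\subset\subset\wO$), and apply the sharp one-body GN on $\R^d$ to $\chi\sqrt{\rho_\Psi}$ extended by zero:
$$
\|\chi\sqrt{\rho_\Psi}\|_{\dot H^s(\R^d)}^2\,\Big(\int_{\R^d}\chi^2\rho_\Psi\Big)^{2s/d}\ \ge\ C_{\rm GN}\int_{\R^d}\chi^{2(1+2s/d)}\rho_\Psi^{1+2s/d}.
$$
Using $0\le\chi\le 1$ and $\chi=1$ on $\Omega$, the right-hand side is bounded below by $C_{\rm GN}\int_\Omega\rho_\Psi^{1+2s/d}$, while $\int_{\R^d}\chi^2\rho_\Psi\le\int_{\wO}\rho_\Psi$. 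It remains to estimate $\|\chi\sqrt{\rho_\Psi}\|_{\dot H^s(\R^d)}^2$ in terms of $\|\sqrt{\rho_\Psi}\|_{\dot H^s(\wO)}^2$ with a multiplicative constant arbitrarily close to $1$.

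The decisive and hardest step is thus the Leibniz-type estimate
$$
\|\chi u\|_{\dot H^s(\R^d)}^2\ \le\ (1+\delta)\,\|u\|_{\dot H^s(\wO)}^2 + C_{\delta,\Omega,\wO}\,\|u\|_{L^2(\wO)}^2,\qquad u\in H^s(\wO).
$$
For integer $s=m$, I would iterate the Leibniz rule $D^\alpha(\chi u)=\sum_{\beta\le\alpha}\binom{\alpha}{\beta}D^\beta\chi\,D^{\alpha-\beta}u$, isolate the principal term $\chi\,D^\alpha u$ via $(a+b)^2\le(1+\delta)a^2+(1+\delta^{-1})b^2$ (using $0\le\chi\le 1$), and absorb the lower-order derivatives by interpolation between $L^2(\wO)$ and $\dot H^s(\wO)$. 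For $s=m+\sigma$ with $\sigma\in(0,1)$, the fractional part is treated through the double integral using the decomposition
$$
\chi(x)D^\alpha u(x)-\chi(y)D^\alpha u(y)=\chi(x)\bigl(D^\alpha u(x)-D^\alpha u(y)\bigr)+\bigl(\chi(x)-\chi(y)\bigr)D^\alpha u(y),
$$
where the cross term is controlled, after splitting the integration into $|x-y|\le r$ and $|x-y|>r$ and using $|\chi(x)-\chi(y)|\le\|\nabla\chi\|_\infty\min(|x-y|,2)$, by $C_{\delta,\Omega,\wO}\|u\|_{L^2(\wO)}^2$. Combining the three ingredients and using $(1+\delta)^{-1}\ge 1-\delta$ gives the stated lower bound after relabeling $\delta$. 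Finally, the scaling identity $C_{\delta,\Omega,\wO}=L^{2s}\,C_{\delta,L\Omega,L\wO}$ follows by applying the inequality to the normalized wave function $\Psi_L(\mathbf{x})=L^{dN/2}\Psi(L\mathbf{x})$ on $(L^{-1}\Omega,L^{-1}\wO)$, using that the kinetic energy and $\int_\Omega\rho^{1+2s/d}$ both scale as $L^{2s}$ under $x\mapsto Lx$, whereas $\int_{\wO}\rho$ is invariant.
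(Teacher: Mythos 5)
Your reduction of the many-body bound to a one-body bound via a Hoffmann--Ostenhof-type inequality
\[
\Big\langle\Psi,\sum_{i=1}^N(-\Delta_{x_i})^s_{|\wO}\Psi\Big\rangle\ \ge\ \|\sqrt{\rho_\Psi}\|_{\dot H^s(\wO)}^2
\]
fails for $s>1$, and this is a genuine gap. The pointwise reverse-triangle estimate you invoke controls $|\sqrt{\rho_\Psi(x)}-\sqrt{\rho_\Psi(y)}|$, but not $|D^\alpha\sqrt{\rho_\Psi(x)}-D^\alpha\sqrt{\rho_\Psi(y)}|$ for $|\alpha|\ge1$; yet for $s=m+\sigma$ with $m\ge1$ the seminorm $\|\sqrt{\rho_\Psi}\|_{\dot H^s(\wO)}$ is built from the $m$-th derivatives $D^\alpha\sqrt{\rho_\Psi}$. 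The reference \cite{HO2-77} covers only $m=1$; for $m\ge2$ no such inequality holds, and it already fails for $N=1$: if $u$ changes sign, $\sqrt{\rho_\Psi}=|u|$ has a kink, so $\|\Delta|u|\|_{L^2}=\infty$ while $\langle u,(-\Delta)^2u\rangle$ is finite. Hence your argument only covers $0<s\le 1$, whereas the lemma is claimed for all $s>0$.

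The paper avoids $\sqrt{\rho_\Psi}$ altogether. It diagonalizes the one-body density matrix, $\gamma_\Psi(x,y)=\sum_n u_n(x)\overline{u_n(y)}$, so that $\rho_\Psi=\sum_n|u_n|^2$ and $\langle\Psi,\sum_i(-\Delta_{x_i})^s_{|\wO}\Psi\rangle=\sum_n\|u_n\|_{\dot H^s(\wO)}^2$; it then applies the one-body estimate \eqref{eq-local-uncertainty-II-one-body} to each eigenfunction $u_n$ individually and recombines by the triangle inequality in $L^{1+2s/d}$ together with H\"older's inequality for sums. This works uniformly in $s>0$ and is the standard mechanism for upgrading a one-body Lieb--Thirring-type inequality to a many-body one. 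Your one-body steps (sharp Gagliardo--Nirenberg applied to $\chi u$, followed by the cutoff-commutation bound $\|\chi u\|^2_{\dot H^s(\R^d)}\le(1+\delta)\|u\|^2_{\dot H^s(\wO)}+C_{\delta,\Omega,\wO}\|u\|^2_{L^2(\wO)}$) do match the paper's Steps 1--3 in substance; there the latter is organized via the fractional IMS formula of \cite[Lemma 14]{LuNaPo-16} and the interpolation estimate \eqref{eq-norm-equivalence-II}, which is where the $s$-extension-domain hypothesis is used and which your Leibniz sketch relies on only implicitly. The many-body reduction, however, must be replaced.
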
	

Here we write $\Omega\subset \subset \wO$  when  $\Omega\subset \wO$ and ${\rm dist}(\Omega, \R^d\backslash \wO)>0$. As usual $L\Omega:=\{Lx,x\in \Omega\}$. The scaling property  $C_{\delta,\Omega,\wO}=C_{\delta,L\Omega,L\wO} L^{2s}$ follows by a change of variables. 

We will deduce Lemma \ref{lem:LUP-II} from its one-body version 
	\begin{equation} \label{eq-local-uncertainty-II-one-body}
	\|u \|^2_{\dot H^s(\wO)} \ge C_{\rm GN} (1-\delta)\frac{\int_\Omega |u|^{2(1+2s/d)} }{\Big( \int_{\wO} |u|^2 \Big)^{2s/d}}  - C_{\delta,\Omega,\wO} \int_{\wO} |u|^2, \quad \forall u \in H^s(\R^d). 
	\end{equation}

\begin{proof}[Proof of Lemma \ref{lem:LUP-II} using \eqref{eq-local-uncertainty-II-one-body}] We follow the proof strategy in  \cite[Lemma 8]{LuNaPo-16}. We introduce the one-body density matrix $\gamma_\Psi: L^2(\R^d)\to L^2(\R^d)$ given by the kernel 
$$\gamma_\Psi(x,y):=  \sum_{j=1}^N \int_{\R^{(d-1)N}} \Psi(x_1,...,x_{j-1},x,x_{j+1},...,x_N) \overline{\Psi(x_1,...,x_{j-1},y,x_{j+1},...,x_N)} \prod_{i \neq j} \,dx_i.$$
Since $\gamma_\Psi$ is a non-negative trace class operator on $L^2(\R^d)$, we can write 
$$\gamma_\Psi(x,y)=\sum_{n \geq 1}u_n(x)\overline{u_n(y)}$$
with an orthogonal family $\{u_n\}_{n\ge 1} \subset L^2(\R^d)$ (the functions $u_n$'s are not necessarily normalized in $L^2(\R^d)$).   
From this representation one obtains $\rho_\Psi= \sum_{n\geq 1} \abs{u_n}^2$ and
\begin{align*}
\left\langle \Psi, \sum_{i=1}^N (-\Delta_{x_i})^s_{|\wO} \Psi \right\rangle
=  \text{Tr} \left[ (-\Delta)^s_{|\wO} \gamma_\Psi\right] 
=\sum_{n \geq 1} \left\langle u_n, (-\Delta)^s _{|\wO}u_n \right\rangle = \sum_{n \geq 1} \|u_n \|^2_{\dot H^s(\wO)}. 
\end{align*} 
Using the triangle inequality, the one-body bound \eqref{eq-local-uncertainty-II-one-body} and H\"older inequality for sums we can bound 
\begin{align*}
&\Big[ C_{\rm GN} (1-\delta) \Big]^{1/(1+2s/d)} \|\rho_\Psi\|_{L^{1+2s/d}(\Omega)} = \Big[ C_{\rm GN} (1-\delta) \Big]^{1/(1+2s/d)} \|\sum_{n\ge 1} |u_n|^2\|_{L^{1+2s/d}(\Omega)}\\
&\le \sum_{n\ge 1} \Big[ C_{\rm GN} (1-\delta) \Big]^{1/(1+2s/d)} \||u_n|^2\|_{L^{1+2s/d}(\Omega)}\\
&\le \sum_{n\ge 1} \Big[  \|u_n\|^2_{\dot H^s(\wO)}  + C_{\delta,\Omega,\wO}   \int_{\wO} |u_n|^2  \Big]^{d/(d+2s)} \Big[ \int_{\wO} |u_n|^2 \Big]^{2s/ (d+2s)}\\
&\le \Big[  \sum_{n\ge 1} \|u_n\|^2_{\dot H^s(\wO)}  + C_{\delta,\Omega,\wO}   \sum_{n\ge 1} \int_{\wO} |u_n|^2  \Big]^{d/(d+2s)} \Big[  \sum_{n\ge 1} \int_{\wO} |u_n|^2 \Big]^{2s/ (d+2s)} \\
& = \Big[  \Big\langle \Psi, \sum_{i=1}^N (-\Delta_{x_i})^s_{|\wO} \Psi \Big\rangle  + C_{\delta,\Omega,\wO}   \int_{\wO}\rho_\Psi   \Big]^{d/(d+2s)} \Big[   \int_{\wO} \rho_\Psi  \Big]^{2s/ (d+2s)}.
\end{align*}
This is equivalent to the desired inequality  in Lemma \ref{lem:LUP-II}.
\end{proof}

It remains to prove \eqref{eq-local-uncertainty-II-one-body}. We will need the following general estimates for Sobolev norms on extension domains.

\begin{lemma}[Comparison of Sobolev norms] \label{lem-norm-equivalence} Let $d\ge 1$ and $s>0$. For any $s$-extension domain $\Omega \subset \R^d$ and $u \in H^s(\Omega)$ we have
	\begin{equation} \label{eq-norm-equivalence}
	\norm{u}^2_{H^s(\Omega)} \leq C(\norm{u}^2_{\dot{H}^s(\Omega)} + \norm{u}^2_{L^2(\Omega)}).
	\end{equation}
	Moreover, for any $t\in (0,s)$ and $\delta>0$ we have
	\begin{equation} \label{eq-norm-equivalence-II}
	\norm{u}^2_{H^t(\Omega)} \leq \delta \norm{u}^2_{\dot{H}^s(\Omega)} + C \norm{u}^2_{L^2(\Omega)}.
	\end{equation}
	The constant $C=C(\Omega,\delta)$ is independent of $u$. 
	\end{lemma}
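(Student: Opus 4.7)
Plan: Both estimates reduce to elementary Fourier-multiplier inequalities on $\R^d$ via the extension operator $T$. For $u\in H^s(\Omega)$ set $U:=Tu$; since $U|_\Omega=u$, monotonicity of the defining integrals yields
\[
\|u\|_{H^s(\Omega)}^2 \le \|U\|_{\dot H^s(\R^d)}^2 + \sum_{|\alpha|\le m}\|D^\alpha U\|_{L^2(\R^d)}^2,
\]
and analogously for $\|u\|_{H^t(\Omega)}^2$. So the task is to bound these $\R^d$-quantities for $U$ in terms of $\|u\|_{L^2(\Omega)}^2$ and $\|u\|_{\dot H^s(\Omega)}^2$.

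On the whole space, Plancherel combined with the pointwise bound $|p|^{2|\alpha|}\le 1+|p|^{2s}$ (valid whenever $|\alpha|\le m\le s$) immediately gives
\[
\sum_{|\alpha|\le m}\|D^\alpha U\|_{L^2(\R^d)}^2 \le C\bigl(\|U\|_{L^2(\R^d)}^2+\|U\|_{\dot H^s(\R^d)}^2\bigr),
\]
which together with the trivial estimate of $\|U\|_{\dot H^s(\R^d)}^2$ furnishes the $\R^d$-analogue of \eqref{eq-norm-equivalence}. For \eqref{eq-norm-equivalence-II} I would invoke the Young-type bound $|p|^{2t}\le \delta|p|^{2s}+C_\delta$, valid for all $0\le t<s$ and $\delta>0$, to obtain
\[
\|U\|_{H^t(\R^d)}^2 \le \delta\,\|U\|_{\dot H^s(\R^d)}^2+C_\delta\,\|U\|_{L^2(\R^d)}^2.
\]
To pass back to $\Omega$, the $t=0$ extension bound gives $\|U\|_{L^2(\R^d)}\le C\|u\|_{L^2(\Omega)}$, and one needs the complementary seminorm bound $\|U\|_{\dot H^s(\R^d)}^2\le C(\|u\|_{\dot H^s(\Omega)}^2+\|u\|_{L^2(\Omega)}^2)$ for the specific extension $T$. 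For the domains of interest in this paper---cubes and finite unions of connected cubes---this is a standard property of the Stein/Calderón-type reflection along Whitney cubes, as documented in \cite{Ad-75,Tr-78}.

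The main obstacle is precisely this last seminorm-level bound on $T$. The abstract assumption "$T:H^t(\Omega)\to H^t(\R^d)$ is bounded for $0\le t\le s$" only yields $\|U\|_{H^s(\R^d)}\le C\|u\|_{H^s(\Omega)}$, which makes the proof of \eqref{eq-norm-equivalence} tautological because its right-hand side already contains the intermediate derivatives one is trying to control. The way around this circularity is to use that the explicit Stein/Calderón construction for $s$-extension domains like cubes is designed to preserve the homogeneous seminorm modulo an $L^2$-error; it is this geometric/constructive input, rather than the abstract boundedness of $T$ between full Sobolev spaces, that must be cited. With that input in hand, \eqref{eq-norm-equivalence} follows by substitution, and choosing $\delta$ appropriately small in the Young step (and absorbing the resulting $L^2$-contribution into $C$) delivers \eqref{eq-norm-equivalence-II}.
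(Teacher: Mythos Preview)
You correctly diagnose the circularity: bounding $\|Tu\|_{\dot H^s(\R^d)}$ by $\|u\|_{\dot H^s(\Omega)}+\|u\|_{L^2(\Omega)}$ is \emph{not} a consequence of the abstract definition of an $s$-extension domain, and your appeal to the explicit Stein/Calder\'on construction on cubes would only establish the lemma for that restricted class of domains, not for arbitrary $s$-extension domains as stated. So as a proof of the lemma in full generality, the proposal has a genuine gap.

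The paper sidesteps the seminorm issue by an \emph{absorption argument} that uses only the full-norm boundedness of $T$. From H\"older in Fourier space one has, for $f=Tu$ and $m=\lfloor s\rfloor<s$,
\[
\|Tu\|_{H^m(\R^d)}^2 \le C\,\|Tu\|_{H^s(\R^d)}^{2m/s}\,\|Tu\|_{L^2(\R^d)}^{2(1-m/s)}.
\]
Applying the abstract bounds $\|Tu\|_{H^s(\R^d)}\le C\|u\|_{H^s(\Omega)}$ and $\|Tu\|_{L^2(\R^d)}\le C\|u\|_{L^2(\Omega)}$, then Young's inequality with a parameter $\epsilon>0$, yields
\[
\|u\|_{H^m(\Omega)}^2 \le \|Tu\|_{H^m(\R^d)}^2 \le C\epsilon\,\|u\|_{H^s(\Omega)}^2 + C_\epsilon\,\|u\|_{L^2(\Omega)}^2.
\]
Now the definition of the $H^s(\Omega)$-norm gives $\|u\|_{H^s(\Omega)}^2=\|u\|_{\dot H^s(\Omega)}^2+\sum_{|\alpha|\le m}\|D^\alpha u\|_{L^2(\Omega)}^2$, so the contribution $C\epsilon\,\|u\|_{H^m(\Omega)}^2$ hidden on the right can be absorbed into the left for $\epsilon$ small, leaving
\[
\|u\|_{H^m(\Omega)}^2 \le C\bigl(\|u\|_{\dot H^s(\Omega)}^2+\|u\|_{L^2(\Omega)}^2\bigr),
\]
which is \eqref{eq-norm-equivalence}. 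The same interpolation-plus-absorption maneuver, now for $\|u\|_{\dot H^t(\Omega)}^2\le\|Tu\|_{\dot H^t(\R^d)}^2$ with $0<t<s$, gives \eqref{eq-norm-equivalence-II} after invoking the already-established \eqref{eq-norm-equivalence}. Thus no seminorm-level property of $T$ is needed; the small parameter in Young's inequality is what breaks the circularity you identified.
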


	\begin{proof} {\bf Proof of \eqref{eq-norm-equivalence}.} Write $s=m+ \sigma$ with $m \in \mathbb{N}$ and $0\leq \sigma <1$.
		We only prove the case $\sigma>0$, namely $s>m$ (the case $s=m$ is easier). Note that by H\"older's inequality in Fourier space we have for all $f\in H^s(\R^d)$: 
		\begin{align*}
		\norm{f}^2_{H^m(\R^d)} &\le C \int_{\R^ d} (1+ |p|^2)^m |\widehat f (p)|^2 d p \\
		&\le C\Big( \int_{\R^ d} (1+ |p|^2)^{s} |\widehat f (p)|^2 d p  \Big)^{\frac{m}{s}}  \Big( \int_{\R^ d} |\widehat f (p)|^2 d p  \Big)^{1-{\frac{m}{s}}}\\
		&\le C \norm{f}_{H^s(\R^d)}^{2\frac{m}{s}}\norm{f}_{L^2(\R^d)}^{2(1-{\frac{m}{s}})}.
		\end{align*}
		Let $T$ be an extension operator, namely $T:H^t(\Omega)\to H^t(\R^d)$ is a bounded linear operator for all $0\le t\le s$ and $Tu_{|\Omega}=u$. For any $u\in H^s(\Omega)$, using the above estimate with $f=Tu$ and Young's Inequality we obtain, for every $\eps>0$, 
		\begin{align*}
		\norm{u}^2_{H^m(\Omega)} & \leq \norm{Tu}^2_{H^m(\R^d)} \\
		&\leq  C\norm{Tu}_{H^s(\R^d)}^{2\frac{m}{s}}\norm{Tu}_{L^2(\R^d)}^{2(1-{\frac{m}{s}})} \\
		&= C(\epsilon\norm{Tu}^2_{H^s(\R^d)})^{\frac{m}{s}} (\epsilon^{-\frac{m}{s-m}} \norm{Tu}^2_{L^2(\R^d)})^{1-{\frac{m}{s}}} \\
		&\leq C( \epsilon \norm{Tu}^2_{H^s(\R^d)} + C_\epsilon \norm{Tu}^2_{L^2(\R^d)}) \\
		& \leq C(\epsilon\norm{u}^2_{H^s(\Omega)} + C_\epsilon \norm{u}^2_{L^2(\Omega)}).
		\end{align*}
		%where the second step follows from H\"older's inequality in Fourier space and the third step follows from Young's Inequality.
		Rearranging the terms we find that 
		$$(1-C\epsilon) \norm{u}_{H^m(\Omega)}^2 \leq C(\epsilon\norm{u}_{\dot{H}^s(\Omega)}^2 + C_\epsilon \norm{u}_{L^2(\Omega)}^2).$$
		By choosing $\epsilon>0$ small enough we arrive at \eqref{eq-norm-equivalence}.

\bigskip
\noindent{\bf Proof of \eqref{eq-norm-equivalence-II}.}  
	Let $T$ be an extension operator. Then for any $\epsilon >0$
	\begin{align*}
	\norm{u}_{\dot{H}^t(\Omega)}^2 & \leq  \norm{Tu}_{\dot{H}^t(\R ^d)}^2 \\
	& \leq C\norm{Tu}_{\dot{H}^s(\R ^d)}^{2\frac{t}{s}}   \norm{Tu}_2^{2(1-\frac{t}{s})} \\
	&= C(\epsilon\norm{Tu}_{\dot{H}^s(\R ^d)}^2)^{\frac{t}{s}}  
	(\epsilon ^{-\frac{t}{s-t}}\norm{Tu}_2^2)^{1-\frac{t}{s}} \\
	& \leq C( \epsilon\norm{Tu}_{\dot{H}^s(\R ^d)}^2 + \epsilon ^{-\frac{t}{s-t}}\norm{Tu}_2^2) \\
	& \leq C (\epsilon\norm{u}_{H^s(\Omega)}^2 + \epsilon ^{-\frac{t}{s-t}}\norm{u}_2^2)
	\end{align*}
	where the second step follows from H\"older's inequality in Fourier space and the third step follows from Young's Inequality.
	Now \eqref{eq-norm-equivalence-II} follows from \eqref{eq-norm-equivalence}.
\end{proof}

Now we provide 

\begin{proof}[Proof of \eqref{eq-local-uncertainty-II-one-body}] {\bf Step 1.} Let $\chi, \eta: \R^d \to [0,1]$ be two smooth functions such that 
$$\chi^2+\eta^2=1, \quad \chi(x)=1 \text{ if } x\in \Omega, \quad {\rm supp}\chi \subset\wO.$$
By the definition 
\begin{align*}
	C_{\rm GN} := \inf_{\substack{u \in H^s(\R^d)\\ 
			\norm{u}_{L^2} = 1}} \frac{\langle u, (-\Delta)^s u \rangle}{\int_{\R^d} |u|^{2 (1+\frac{2s}{d})}}.
  \end{align*}	
   we have
\begin{align} \label{eq:ABC-1}
\|\chi u\|^2_{\dot H^s(\R^d)}  \ge  C_{\rm GN} \frac{\int_{\R^d} |\chi u|^{2(1+2s/d)} }{\Big( \int_{\R^d} |\chi u|^2 \Big)^{2s/d}} \ge C_{\rm GN} \frac{\int_{\Omega} |u|^{2(1+2s/d)} }{\Big( \int_{\wO} |u|^2 \Big)^{2s/d}}.
\end{align}
It remains to compare $\|\chi u\|^2_{\dot H^s(\R^d)}$ with $\| u\|^2_{\dot H^s(\wO)}$. 

\bigskip

\noindent{\bf Step 2.}  Now we prove that for any $\delta\in (0,1)$
\begin{align} \label{eq:ABC-2}
\|\chi u\|^2_{\dot H^s(\R^d)}  \le  (1+\delta) \|\chi u\|^2_{\dot H^s(\wO)} + C_{\delta} \|u\|_{L^2(\wO)}^2. 
\end{align}

If $s \in \mathbb{N}$, then $\|\chi u\|^2_{\dot H^s(\R^d)}= \|\chi u\|^2_{\dot H^s(\wO)}$ since $\supp (\chi u)\subset \wO$ (thus \eqref{eq:ABC-2} is trivial). Consider the case for $s=m + \sigma$ with $m \in \mathbb{N}$ and $0<\sigma<1$. Since ${\rm supp}\chi \subset \subset\wO$ we have 
$$\ell:={\rm dist}({\rm supp} \chi, \R^d\backslash \wO)>0.$$
Therefore, 
	\begin{align*}
	\norm{\chi u}_{\dot{H}^s(\R^d)}^2 &= \norm{\chi u}_{\dot{H}^s(\wO)}^2 + 
	 \sum_{\abs{\alpha}=m}	2 c_{d, \sigma} \frac{m!}{\alpha!}  \int_{\wO} \int_{\R^d\backslash \wO} \frac{\abs{D^{\alpha}(\chi u)(x)-D^{\alpha}(\chi u)(y)}^2}{\abs{x-y}^{d + 2 \sigma}} \,dx dy \\
	&= \norm{\chi u}_{\dot{H}^s(\wO)}^2 + 
	\sum_{\abs{\alpha}=m}	2 c_{d, \sigma} \frac{m!}{\alpha!} \int_{\wO} \int_{\R^d\backslash \wO} \frac{\abs{D^{\alpha}(\chi u)(y)}^2}{\abs{x-y}^{d + 2 \sigma}} \,dx dy \\
	& \leq \norm{\chi u}_{\dot{H}^s(\wO)}^2 + 
	\sum_{\abs{\alpha}=m}	2 c_{d, \sigma} \frac{m!}{\alpha!}  \int_{\wO}\abs{D^{\alpha}(\chi u)(y)}^2 dy \int_{\R^d \backslash B_\ell(0)} \frac{1}{\abs{x}^{d+2\sigma}} \,dx \\
	&= \norm{\chi u}_{\dot{H}^s(\wO)}^2 + C_{d,\sigma,m,\ell}  \norm{\chi u}_{\dot{H}^m(\wO)}^2.
	\end{align*}
	In the second equality we used $D^\alpha(\chi u)(x)=0$ when $x\notin \wO$. 	By \eqref{eq-norm-equivalence-II} we have
	$$
	\norm{\chi u}_{\dot{H}^m(\wO)}^2 \le \delta \norm{\chi u}_{\dot{H}^s(\wO)}^2 + C_\delta \|\chi u\|_{L^2(\wO)}^2.
	$$
	Thus  \eqref{eq:ABC-2} holds true.  	It remains to bound $\|\chi u\|^2_{\dot H^s(\wO)}$ from above. 
\bigskip

\noindent{\bf Step 3.}  Now we prove that for any $\delta\in (0,1)$
\begin{align} \label{eq:ABC-3}
	\norm{\chi u}_{\dot{H}^s(\wO)}^2 \le (1+\delta) \norm{u}_{\dot{H}^s(\wO)}^2+ C_\delta \| u\|_{L^2(\wO)}^2.  
\end{align}

We use the fractional IMS localization formula from \cite[Lemma 14]{LuNaPo-16}:  
$$
	\abs{\norm{u}_{\dot{H}^s(\wO)}^2 - \norm{\chi u}_{\dot{H}^s(\wO)}^2-\norm{\eta u}_{\dot{H}^s(\wO)}^2}
	\leq C \left(\norm{\chi u}_{H^t(\wO)}^2+\norm{\eta u}_{H^t(\wO)}^2 \right)
$$
	for some $t\in (0,s)$. By \eqref{eq-norm-equivalence-II} again we have, for any $\delta\in (0,1)$,
	$$
	\norm{\chi u}_{{H}^t(\wO)}^2 \le \delta \norm{\chi u}_{\dot{H}^s(\wO)}^2 + C_\delta \|\chi u\|_{L^2(\wO)}^2, \quad \norm{\eta u}_{{H}^t(\wO)}^2 \le \delta \norm{\eta u}_{\dot{H}^s(\wO)}^2 + C_\delta \|\eta u\|_{L^2(\wO)}^2.
	$$
	Therefore, for any $\delta\in (0,1)$,
	$$
	\abs{\norm{u}_{\dot{H}^s(\wO)}^2 - \norm{\chi u}_{\dot{H}^s(\wO)}^2-\norm{\eta u}_{\dot{H}^s(\wO)}^2}
	\leq \delta \left(\norm{\chi u}_{H^s(\wO)}^2+\norm{\eta u}_{H^s(\wO)}^2 \right) + C_\delta \|u\|_{L^2(\wO)}^2
$$
and hence
\begin{align*}
	\norm{u}_{\dot{H}^s(\wO)}^2 & \ge (1-\delta) \Big( \norm{\chi u}_{\dot{H}^s(\wO)}^2+\norm{\eta u}_{\dot{H}^s(\wO)}^2 \Big) - C_\delta \| u\|_{L^2(\wO)}^2.
\end{align*}
The latter bound implies \eqref{eq:ABC-3}. The inequality \eqref{eq-local-uncertainty-II-one-body}  follows from \eqref{eq:ABC-1}, \eqref{eq:ABC-2} and \eqref{eq:ABC-3}. 
\end{proof}

\section{Local exclusion} \label{sec5}

We now prove our local exclusion bound which will allow us to control the error terms from the local uncertainty bounds. Here we have to refine 
the local exclusion in \cite[Lemma 6]{LuNaPo-16} (see also \cite[Theorem 2]{LuPoSo-15}) as the existing bound is only good for a cube, and it becomes very weak for   a set with small volume to length ratio, for example a long chain of cubes. In the following, we will deal with general sets which can be decomposed into several pieces with small diameters. Also, we allow that the sets in different length scales may be overlapping.

\begin{lemma}[Local exclusion principle] \label{lem-interaction-K2} Let $d\ge 1$  and $s>0$. Let $\{R_n\}_{n\ge 1}$ be a decreasing sequence of positive numbers. Let $\{\Omega_{n,m}\}_{n,m\ge 1}$ be a collection of subsets of $\R^d$ such that 
$$
{\rm diam}(\Omega_{n,m}) \le R_n, \quad \sum_{m} \1_{\Omega_{n,m}} \le C_n. 
$$
Then for any normalized wave function $\Psi\in L^2(\R^{dN})$ we have
\begin{align}
\Big\langle \Psi, \sum_{1\le i<j\le N} \frac{1}{|x_i-x_j|^{2s} }  \Psi\Big\rangle \ge \sum_{n\ge 1}  \frac{1}{2 C_n} \Big( \frac{1}{R_n^{2s}} - \frac{1}{R_{n-1}^{2s}}\Big)  \sum_{m\ge 1} \Big(\int_{\Omega_{n,m}} \rho_\Psi \Big) \Big( \int_{\Omega_{n,m}} \rho_\Psi -1 \Big). 
\end{align}
Here we use the convention $R_0=+\infty$. 
\end{lemma}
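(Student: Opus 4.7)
My plan is to derive the inequality pointwise at the level of the two-body integrand $|x_i-x_j|^{-2s}$, separating the scales via a telescoping decomposition, and then take the expectation using the standard second-quantized identity for the pair count in a set.

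Step 1: a scale decomposition of the interaction kernel. For any $r>0$, with the convention $R_0=+\infty$, I claim that
$$\frac{1}{r^{2s}}\;\ge\;\sum_{n\ge 1}\left(\frac{1}{R_n^{2s}}-\frac{1}{R_{n-1}^{2s}}\right)\1[r\le R_n].$$
Indeed, if $r>R_1$ the right-hand side is $0$; otherwise let $n_0$ be the largest $n$ with $r\le R_n$ (this exists since $R_n\downarrow 0$ or can be unbounded; either way the series is really finite for each fixed $r>0$), and the right-hand side telescopes to $R_{n_0}^{-2s}\le r^{-2s}$.

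Step 2: from balls of radius $R_n$ to the given sets $\Omega_{n,m}$. Since $\mathrm{diam}(\Omega_{n,m})\le R_n$, whenever $x,y$ both lie in some $\Omega_{n,m}$ we have $|x-y|\le R_n$, so $\1[|x-y|\le R_n]\ge \1[\exists m:\,x,y\in \Omega_{n,m}]$. On the other hand, the overlap hypothesis $\sum_m\1_{\Omega_{n,m}}\le C_n$ implies that for any fixed $x,y$ the number of indices $m$ with $x,y\in\Omega_{n,m}$ is at most $C_n$, whence
$$\sum_m \1_{\Omega_{n,m}}(x)\1_{\Omega_{n,m}}(y)\;\le\;C_n\,\1[\exists m:\,x,y\in \Omega_{n,m}]\;\le\;C_n\,\1[|x-y|\le R_n].$$
Combining with Step 1 and evaluating at $x=x_i$, $y=x_j$ gives the pointwise lower bound
$$\frac{1}{|x_i-x_j|^{2s}}\;\ge\;\sum_{n\ge 1}\frac{1}{C_n}\left(\frac{1}{R_n^{2s}}-\frac{1}{R_{n-1}^{2s}}\right)\sum_{m\ge 1}\1_{\Omega_{n,m}}(x_i)\1_{\Omega_{n,m}}(x_j).$$

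Step 3: take the expectation in the state $\Psi$ and sum over $i<j$. For each measurable $\Omega\subset\R^d$, the operator $N_\Omega:=\sum_{i=1}^N\1_\Omega(x_i)$ satisfies
$$2\sum_{1\le i<j\le N}\1_\Omega(x_i)\1_\Omega(x_j)\;=\;N_\Omega^2-N_\Omega,$$
and Cauchy--Schwarz (equivalently, $\mathrm{Var}\ge 0$) gives $\langle\Psi,N_\Omega^2\Psi\rangle\ge\langle\Psi,N_\Omega\Psi\rangle^2=\bigl(\int_\Omega\rho_\Psi\bigr)^2$. Applying this with $\Omega=\Omega_{n,m}$ to the pointwise bound above yields exactly
$$\Big\langle \Psi,\sum_{i<j}\frac{1}{|x_i-x_j|^{2s}}\Psi\Big\rangle\;\ge\;\sum_{n\ge 1}\frac{1}{2C_n}\Big(\frac{1}{R_n^{2s}}-\frac{1}{R_{n-1}^{2s}}\Big)\sum_{m\ge 1}\Big(\int_{\Omega_{n,m}}\rho_\Psi\Big)\Big(\int_{\Omega_{n,m}}\rho_\Psi-1\Big).$$

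The main conceptual point—and the only place where the new flexibility enters—is Step 2: the telescoping decomposition in Step 1 charges each pair $(x_i,x_j)$ essentially once, while the controlled overlap hypothesis $\sum_m\1_{\Omega_{n,m}}\le C_n$ lets us trade the indicator of co-localization for the quadratic form $\sum_m\1_{\Omega_{n,m}}(x)\1_{\Omega_{n,m}}(y)$, whose expectation is precisely the local particle-number variance we want. No routine calculation is anticipated beyond these three observations, and the assumed upper bound $C_n$ may a priori depend on $n$, which is already tracked explicitly in the prefactor $1/(2C_n)$.
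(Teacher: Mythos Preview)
Your proof is correct and follows essentially the same approach as the paper: the telescoping scale decomposition of $|x|^{-2s}$ (your Step~1), the overlap bound $\sum_m \1_{\Omega_{n,m}}(x)\1_{\Omega_{n,m}}(y)\le C_n\,\1(|x-y|\le R_n)$ (your Step~2), and the Cauchy--Schwarz/variance estimate for the pair count (your Step~3) are exactly the three ingredients used in the paper, in the same order. The only cosmetic difference is that the paper derives the telescoping inequality via an Abel-summation chain, whereas you argue by case analysis on the largest $n$ with $r\le R_n$; both are equivalent.
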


	\begin{proof} We start from an elementary but very useful formula  
	\begin{align*}
	\frac{1}{|x|^{2s}} &\ge \sum_{n\ge 1} 	\frac{1}{|x|^{2s}} \1(R_{n+1}< |x| \le R_n) \\
	&\ge \sum_{n\ge 1} 	\frac{1}{R_n^{2s}} \Big( \1(|x|\le R_n) - \1(|x|\le R_{n+1}) \Big)\\
	&= \sum_{n\ge 1} 	\Big( \frac{1}{R_n^{2s}} - \frac{1}{R_{n-1}^{2s}}\Big)   \1(|x|\le R_n) 
	\end{align*}
	with the convention $R_0=+\infty$. Moreover, from the assumption
	$$
{\rm diam}(\Omega_{n,m}) \le R_n, \quad \sum_{m} \1_{\Omega_{n,m}} \le C_n. 
$$
we can bound 
$$
 \1(|x_i-x_j |\le R_n) \ge \frac{1}{C_n}\sum_{m} \1_{\Omega_{n,m}} (x_i) \1_{\Omega_{n,m}} (x_j).
 $$
Consequently, we have the pointwise estimate
	\begin{align*}
	\frac{1}{|x_i-x_j|^{2s}} &\ge \sum_{n\ge 1} 	\Big( \frac{1}{R_n^{2s}} - \frac{1}{R_{n-1}^{2s}}\Big)   \1(|x_i-x_j |\le R_n) \\
	&\ge \sum_{n\ge 1} \sum_{m \ge 1} \frac{1}{C_n} \Big( \frac{1}{R_n^{2s}} - \frac{1}{R_{n-1}^{2s}}\Big)  \1_{\Omega_{n,m}} (x_i) \1_{\Omega_{n,m}} (x_j). 
	\end{align*}
	Next, similarly to \cite[Lemma 6]{LuNaPo-16},  by the Cauchy--Schwarz inequality we have
	 	\begin{align*}
		\Big\langle \Psi, \sum_{1\le i<j\le N} \1_{\Omega_{n,m}} (x_i) \1_{\Omega_{n,m}} (x_j) \Psi\Big\rangle &= \frac{1}{2} \Big\langle \Psi,  \Big( \sum_{i=1}^N \1_{\Omega_{n,m}} (x_i) \Big)^2  \Psi\Big\rangle  -  \frac{1}{2} \Big\langle \Psi,\sum_{i=1}^N  \1_{\Omega_{n,m}} (x_i) \Psi\Big\rangle \\
		&\ge \frac{1}{2} \Big| \Big\langle \Psi, \sum_{i=1}^N  \1_{\Omega_{n,m}} (x_i)  \Psi \Big\rangle \Big|^2  -\frac{1}{2}\Big\langle \Psi, \sum_{i=1}^N  \1_{\Omega_{n,m}} (x_i)  \Psi \Big\rangle  \\
		& = \frac{1}{2} \Big( \int_{\Omega_{n,m}} \rho_\Psi \Big)^2 - \frac{1}{2} \int_{\Omega_{n,m}} \rho_\Psi .
				\end{align*}
This ends the proof of Lemma \ref{lem-interaction-K2}. 
		\end{proof}

\section{Proof of Theorem \ref{thm-main1}} \label{sec6}
 
 \begin{proof} By a standard density argument, we can assume that the normalized wave function $\Psi\in L^2(\R^{dN})$ is smooth with compact support. By scaling, we can assume that $\Psi$ is supported in $[-1/2,1/2]^{dN}$. Consequently, the one-body density
 $\rho_\Psi $
 is supported in $[-1/2,1/2]^d$. 

  \bigskip
  \noindent 
 {\bf Step 1: A decomposition of covering sub-cubes.}  We fix constants $\delta\in (0,1)$ and  $\eps=n_0^{-1}$ with an integer number $n_0\ge 2$ (we can choose $n_0=2$).  We divide $[-1/2,1/2]^d$ into disjoint sub-cubes by induction:  in the $n$-th step we obtain a collection $G^{n}$ of sub-cubes of side length $\eps^{n}$, which can be decomposed further into three disjoint sub-collections 
  $$G^{n}=G^{n,0}\bigcup G^{n,1}\bigcup G^{n,2}.$$
  Heuristically, $G^{n,0} \bigcup G^{n,1}$ contains ``good sets" concerning the uncertainty principle while $G^{n,2}$ contains ``good sets" concerning the exclusion principle.  The precise construction is as follows. 
  
%   $G^{n,0}$ contains the cubes such that the mass in each cube is less than $\delta$, making Lemma \ref{lem:LUP} useful. Moreover, $G^{n,1}$ contains the cubes that can be distributed to ``disjoint clusters" such that the mass in each cluster is smaller than $1+\delta$, making Lemma \ref{lem:LUP-II} useful. Technically, thanks to the removal of all cubes in $G^{n,0}$, $G^{n,1}$ contains no ``very long chains of cubes", consequently all clusters in $G^{n,1}$ must belong to a finite collections of patterns which is important to apply Lemma \ref{lem:LUP-II}. On the other hand, $G^{n,2}$ contains disjoint clusters such that the mass in each cluster is bigger than $1+\delta$, making the uncertainty principle in Lemma \ref{lem-interaction-K2} useful. Finally, in order to estimate the kinetic energy of the cubes in $G^{n,0}$, we have to divide them further. 

  \bigskip
  \noindent 
{\bf Initial step.} When $n=0$, we simply take 
$$G^0=G^{0,2}=\{[-1/2,1/2]^d\}, \quad G^{0,0}=G^{0,1}=\emptyset.$$ 

 \bigskip
  \noindent 
{\bf Induction step.} Let 
$$G^{n-1}=G^{n-1,0}\bigcup G^{n-1,1}\bigcup G^{n-1,2}$$
 be the collection of sub-cubes of side length $\eps^{n-1}$ obtained from the $(n-1)$-th step. In the $n$-th step, we divide each sub-cube in $G^{n-1,2}$ into $\eps^{-d}$ sub-cubes of the same size. Thus each new sub-cube has the side length $\eps^{n}$. Let $G^{n}$ be the collection of all these new sub-cubes. We decompose 
 $$G^{n}=G^{n,0}\bigcup G^{n,1}\bigcup G^{n,2}$$
 as follows. 
 \begin{itemize}
 \item  We denote by $G^{n,0}$ the collection of all sub-cubes $Q$ in $G^{n}$ such that 
 $$\int_Q \rho_\Psi\le \delta.$$
 
 \item We can think of the sub-cubes in $G^n\backslash G^{n,0}$ as a graph where we put edges between neighboring sub-cubes (a cube $Q_1$ neighbours to  a cube $Q_2$ if ${\rm dist}(Q_1,Q_2)=0$). Thus the sub-cubes in $G^n\backslash G^{n,0}$ can be decomposed into disjoint connected components that we call {\bf clusters} (here the connectivity is considered in the graphical sense, which is different from the topological sense). For any cluster $K \subset G^n\backslash G^{n,0}$, we define 
\begin{align} \label{eq:closure}
 \Omega_K :=   \bigcup_{Q\in K} Q, \quad \widetilde \Omega_K :=  \Big\{x\in \R^d,  {\rm dist} (x, \Omega_K) <\frac{\eps^n}{4} \Big\}.
 \end{align}
Note that each  {\bf closure} $\widetilde \Omega_K$ is topologically connected and the closures $\{\widetilde \Omega_K\}$ of different clusters $K$ are disjoint (the sub-cubes in $G^{n,0}$ serve to separate these components).

 \item We denote by $G^{n,1}$ the union of all clusters $K$ such that 
 $$
 \int_{  \widetilde \Omega_K} \rho_\Psi <1+\delta,
 $$
 and $G^{n,2}$ the union of all clusters $K$ such that 
   $$
 \int_{  \widetilde \Omega_K} \rho_\Psi \ge 1+\delta.
 $$
 Only the sub-cubes in $G^{n,2}$ will be divided further in the $(n+1)$-step. 
 \end{itemize}

 Since $\rho_{\Psi} \in L^1([0,1]^d)$ the construction terminates after finitely many steps.
	We now have a division of $[0,1]^d$ as the disjoint union of sub-cubes
	\begin{align} \label{eq:inclusion}
	\supp \rho_\Psi \subset [0,1]^d = \bigcup_{n\ge 1} \Big(  \bigcup_{Q\in G^{n,0} \cup G^{n,1}} Q \Big) .
		\end{align}

	\bigskip
  \noindent 
 {\bf Step 2: Uncertainty principle for $G^{n,0}$.}  For any sub-cube $Q\in G^{n,0}$, we have $|Q|=\eps^{nd}$ and
 $$
 \int_Q \rho_\Psi \le \delta. 
 $$
 Therefore, the local uncertainty principle in Lemma \ref{lem:LUP} implies that 
\begin{align} \label{eq:final-1a}
	\Big\langle \Psi, \sum_{i=1}^N (-\Delta_{x_i})^{s}_{|Q} \Psi \Big\rangle &\ge \frac{1}{C} \frac{\int_Q \rho_\Psi^{1+2s/d} }{\Big( \int_Q \rho_\Psi \Big)^{2s/d}}  - \frac{C}{|Q|^{2s/d}} \int_Q \rho_\Psi \nn\\
	&\ge \frac{1}{C\delta^{2s/d}} \int_Q \rho_\Psi^{1+2s/d}  - \frac{C}{\eps^{2sn}} \int_Q \rho_\Psi, 
	\quad \forall Q\in  G^{n,0}.
	\end{align}
Since the sub-cubes $\{Q\}_{Q\in G^{n,0}, n\ge 1}$ are disjoint, we have 
\begin{align} \label{eq:final-1}
	\delta^{s/d}\Big\langle \Psi, \sum_{i=1}^N (-\Delta_{x_i})^{s}_{|\R^d} \Psi \Big\rangle &\ge  \sum_{n\ge 1} \sum_{Q\in G^{n,0}} \delta^{s/d}  \Big\langle \Psi, \sum_{i=1}^N (-\Delta_{x_i})^{s}_{|Q} \Psi \Big\rangle \nn\\
	& \ge \sum_{n\ge 1} \sum_{Q\in G^{n,0}} \Big( \frac{1}{C\delta^{s/d}} \int_Q \rho_\Psi^{1+2s/d}  - \frac{C\delta^{s/d}}{\eps^{2sn}} \int_Q \rho_\Psi \Big). 
\end{align}
		
\bigskip
  \noindent 
 {\bf Step 3: Uncertainty principle for $G^{n,1}$.} Let $n\ge 1$. For any cluster $K\subset G^{n,1}$,  
  $$
 \int_Q \rho_\Psi \ge \delta, \quad \forall Q\in K
 $$
 while 
 $$
 \sum_{Q\in K}  \int_Q \rho_\Psi = \int_{ \Omega_K} \rho_\Psi \le  \int_{  \widetilde \Omega_K} \rho_\Psi <1+\delta.
 $$
 Here $\widetilde \Omega_K$ is the closure defined in \eqref{eq:closure}. Thus the cluster $K$ is the union of at most $(\delta^{-1}+1)$ disjoint sub-cubes. Consequently, the rescaled set 
 $$\eps^{-n}\Omega_K:=\bigcup_{Q\in K} (\eps^{-n} Q)$$
 is the union of at most $(\delta^{-1}+1)$ disjoint unit cubes in $\R^d$. Moreover, these sub-cubes are connected in the graphical sense (recall that a cube $Q_1$ neighbours to  a cube $Q_2$ if ${\rm dist}(Q_1,Q_2)=0$). Therefore, up to translation (such that there exists one cube in $K$ centered at $0$), $\eps^{-n}\Omega_K$ belongs to a finite collection of subsets of $\R^d$ and the collection depends only on $d$ and $\delta$ (but independent of $\eps,n$). By the definition of the closure, we have 
$$
\eps^{-n} \widetilde \Omega_K :=  \Big\{x\in \R^d,  {\rm dist} (x, \eps^{-n} \Omega_K) <\frac{1}{4} \Big\}.
$$ 
 This implies that up to translation $\eps^{-n} \widetilde \Omega_K$ also belongs to a finite collection of subsets of $\R^d$ which depends only on $d$ and $\delta$.

 Now we apply the local uncertainty principle in Lemma \ref{lem:LUP-II} with $\Omega_K\subset\subset  \widetilde \Omega_K \subset \R^d$. Recall that up to translation, $\eps^{-n} \Omega_K$ and $\eps^{-n} \widetilde \Omega_K$ belong to a finite collection of subsets of $\R^d$ which depends only on $d,\delta$.  Since the kinetic operator $(-\Delta)^s$ is translation-invariant, we deduce that the constant $C_{\delta, \eps^{-n} \Omega_K, \eps^{-n} \widetilde \Omega_K}$ in Lemma \ref{lem:LUP-II} depends only on $d,s,\delta$. Combining with the bound $\int_{ \widetilde \Omega_K} \rho_\Psi <1+\delta$ we get   
 \begin{align} \label{eq:final-2a}
	\Big\langle \Psi, \sum_{i=1}^N (-\Delta_{x_i})^{s}_{| \widetilde \Omega_K} \Psi \Big\rangle &\ge C_{\rm GN} (1-\delta)\frac{ \int_{\Omega_K} \rho_\Psi^{1+2s/d} }{\Big( \int_{ \widetilde \Omega_K} \rho_\Psi \Big)^{2s/d}}  - C_{\delta,\Omega_K, \widetilde \Omega_K} \int_{ \widetilde \Omega_K}\rho_\Psi \nn\\
	&\ge C_{\rm GN} \frac{ (1-\delta)}{(1+\delta)^{2s/d}}  \int_{\Omega_K} \rho_\Psi^{1+2s/d}  - \frac{C_\delta}{\eps^{2sn}} \int_{ \widetilde \Omega_K}\rho_\Psi. 
	\end{align}  
	Since the sets $\{ \wO_K\}_{n\ge 1, K \subset G^{n,1}}$ are disjoint, we find that 
	 \begin{align} \label{eq:final-2}
	 &(1-\delta^{s/d})\Big\langle \Psi, \sum_{i=1}^N (-\Delta_{x_i})^{s}_{| \R^d} \Psi \Big\rangle\ge \sum_{n\ge 1} \sum_{K \subset G^{n,1}} (1-\delta^{s/d}) \Big\langle \Psi, \sum_{i=1}^N (-\Delta_{x_i})^{s}_{| \widetilde \Omega_K} \Psi \Big\rangle 
	\nn\\
	&\ge \sum_{n\ge 1} \sum_{K \subset G^{n,1}} \Big( C_{\rm GN} \frac{ (1-\delta)(1-\delta^{s/d})}{(1+\delta)^{2s/d}}  \int_{\Omega_K} \rho_\Psi^{1+2s/d}  - \frac{C_\delta}{\eps^{2sn}} \int_{ \widetilde \Omega_K}\rho_\Psi \Big). 
	\end{align}  
	Here the sum is taken over all clusters $K \subset G^{n,1}$. 
 
	\bigskip
	  \noindent 
 {\bf Step 4: Local exclusion principle for $G^{n,2}$.} Let $n\ge 1$. For any sub-cube $Q\in G^{n,2}$, we denote $c_Q$ the center of $Q$. Our key observation is that 
\begin{align} \label{eq:mass-Q-Rn}
 \int_{B(c_Q,R_n/2)} \rho_\Psi \ge 1+\delta, \quad \forall Q \in G^{n,2} 
\end{align}
where
 $$R_n:= 2\sqrt{d}( \delta^{-1}+2)\eps^n.$$
 Indeed, any sub-cube $Q\in G^{n,2}$ must belong to a cluster $K$. The set $\Omega_K$ is a connected union of sub-cubes of diameter $\sqrt{d} \eps^n$. Therefore, the ball $B(c_Q,R_n/2)$   contains either the whole set $\wO_K$, or at least $(\delta^{-1}+1)$ disjoint sub-cubes. Thus \eqref{eq:mass-Q-Rn} follows from the facts that 
   $$
 \int_{  \widetilde \Omega_K} \rho_\Psi \ge 1+\delta. 
 $$
 and
   $$
 \int_{Q'} \rho_\Psi \ge \delta, \quad \forall Q'\in K. 
 $$
 
 On the other hand, since the sub-cubes in $G^{n,2}$ are disjoint, the distances of the centers of the sub-cubes are at least $\sqrt{d}\eps^n$. Therefore, 
\begin{align} \label{eq:mass-Q-Rn-b}
\sum_{Q\in K} \1_{B(c_Q,R_n/2)} \le C_\delta
\end{align}
for a constant $C_\delta>0$ depending only on $d,\delta$. 

Now we apply the local exclusion principle in Lemma \ref{lem-interaction-K2} for the balls $\{B(c_Q, R_n/2)\}_{Q\in G^{n,2}}$ with $n=0,1,2,...$ Using 
$$
{\rm diam} B(c_Q, R_n/2)= R_n= 2\sqrt{d}( \delta^{-1}+2)\eps^n
$$
together with \eqref{eq:mass-Q-Rn} and \eqref{eq:mass-Q-Rn-b} we obtain
\begin{align} \label{eq:final-3aa}
&\Big\langle \Psi, \sum_{1\le i<j\le N} \frac{1}{|x_i-x_j|^{2s} }  \Psi\Big\rangle \nn\\
&\ge  \sum_{n\ge 1} \frac{1}{2 C_\delta} \Big( \frac{1}{R_n^{2s}} - \frac{1}{R_{n-1}^{2s}}\Big) \sum_{Q\in G^{n,2}}  \Big(\int_{B(c_Q, R_n/2)} \rho_\Psi \Big) \Big( \int_{B(c_Q, R_n/2)} \rho_\Psi -1 \Big) \nn\\
&\ge \sum_{n\ge 1}  \frac{1}{C_\delta \eps^{2sn}} \sum_{Q\in G^{n,2}} \int_{B(c_Q, R_n/2)} \rho_\Psi 
\end{align}
where the constant $C_\delta$ depends only on $d,s,\delta$. 

Next, recall that for any $n\ge 1$ the sub-cubes in $G^{n,2}$ will be divided further to get the smaller sub-cubes in $G^{n+1,0}$, $G^{n+1,1}$, $G^{n+1,2}$. Consequently, 
$$
\bigcup_{Q\in G^{n,2}} B(c_Q, R_n/2) \supset \Big( \bigcup_{Q\in G^{n+1,0}} Q\Big)   \bigcup \Big( \bigcup_{K\subset G^{n+1,1}} \wO_K \Big) .$$
Here the last union is taken over all clusters $K\subset G^{n+1,1}$. Recall that all sub-cubes in  $G^{n+1,0}$ are disjoint, and all the closures $\wO_K$ of the clusters $K\subset G^{n+1,1}$ are disjoint. Therefore,
 $$
 \sum_{Q\in G^{n,2}} \int_{B(c_Q, R_n/2)} \rho_\Psi \ge \max\Big\{ \sum_{Q\in G^{n+1,0}} \int_Q \rho_\Psi ,  \sum_{K \subset G^{n+1,1}} \int_{\wO_K} \rho_\Psi \Big\}. 
 $$ 
Hence, we deduce from \eqref{eq:final-3aa} that 
$$
\Big\langle \Psi, \sum_{1\le i<j\le N} \frac{1}{|x_i-x_j|^{2s} }  \Psi\Big\rangle \ge \sum_{n\ge 1} \frac{1}{C_\delta \eps^{2sn}} \max\Big\{ \sum_{Q\in G^{n+1,0}} \int_Q \rho_\Psi ,  \sum_{K \subset G^{n+1,1}} \int_{\wO_K} \rho_\Psi \Big\}$$
for a constant $C_\delta>0$ depending only on $d,s,\delta$ but independent of $n$. By shifting $n\mapsto n-1$ and redefine $C_\delta$, we obtain 
$$
\Big\langle \Psi, \sum_{1\le i<j\le N} \frac{1}{|x_i-x_j|^{2s} }  \Psi\Big\rangle \ge \sum_{n\ge 2} \frac{1}{C_\delta \eps^{2sn}} \max\Big\{ \sum_{Q\in G^{n,0}} \int_Q \rho_\Psi ,  \sum_{K \subset G^{n,1}} \int_{\wO_K} \rho_\Psi \Big\}$$ 
for a constant $C_\delta>0$ depending only on $d,s,\eps,\delta$ but independent of $n$. Moreover, since $\supp \Psi_N\subset [0,1]^{dN}$ and $N\ge 2$, we  have the obvious bound 
$$
\Big\langle \Psi, \sum_{1\le i<j\le N} \frac{1}{|x_i-x_j|^{2s} }  \Psi\Big\rangle \ge \frac{N(N-1)}{2} \ge \frac 1 2 \max\Big\{ \sum_{Q\in G^{1,0}} \int_Q \rho_\Psi ,  \sum_{K \subset G^{1,1}} \int_{\wO_K} \rho_\Psi \Big\}.
$$
Thus in summary, we have the local exclusion bound
\begin{align} \label{eq:final-3}
\Big\langle \Psi, \sum_{1\le i<j\le N} \frac{1}{|x_i-x_j|^{2s} }  \Psi\Big\rangle \ge \sum_{n\ge 1} \frac{1}{C_\delta \eps^{2sn}} \Big( \sum_{Q\in G^{n,0}} \int_Q \rho_\Psi +  \sum_{K \subset G^{n,1}} \int_{\wO_K} \rho_\Psi \Big)
\end{align}
for a constant $C_\delta>0$ depending only on $d,s,\eps,\delta$ but independent of $n$ (the value of $C_\delta$ has been changed from line to line). Here the last sum is taken over all clusters $K\subset G^{n,1}$.

 \bigskip
  \noindent 
 {\bf Step 5: Conclusion.} By summing  \eqref{eq:final-1}, \eqref{eq:final-2} and \eqref{eq:final-3} we have
\begin{align*}
  &\Big\langle \Psi, \sum_{i=1}^N (-\Delta_{x_i})^s_{\R^d}  \Psi\Big\rangle +  \lambda \sum_{i=1}^N \Big\langle \Psi, \sum_{1\le i<j\le N} \frac{1}{|x_i-x_j|^{2s} }  \Psi\Big\rangle\\
  & \ge \sum_{n\ge 1} \sum_{Q\in G^{n,0}} \Big( \frac{1}{C\delta^{s/d}} \int_Q \rho_\Psi^{1+2s/d}  - \frac{C\delta^{s/d}}{\eps^{2sn}} \int_Q \rho_\Psi \Big)\\
  & \quad + \sum_{n\ge 1} \sum_{K \subset G^{n,1}} \Big( C_{\rm GN} \frac{ (1-\delta)(1-\delta^{s/d})}{(1+\delta)^{2s/d}}  \int_{\Omega_K} \rho_\Psi^{1+2s/d}  - \frac{C_\delta}{\eps^{2sn}} \int_{ \widetilde \Omega_K}\rho_\Psi \Big). \\
  & \quad + \sum_{n\ge 1} \frac{\lambda}{C_\delta \eps^{2sn}} \Big( \sum_{Q\in G^{n,0}} \int_Q \rho_\Psi +  \sum_{K \subset G^{n,1}} \int_{\wO_K} \rho_\Psi \Big)
\\
&= \sum_{n\ge 1} \Big( \sum_{Q\in G^{n,0}} \frac{1}{C\delta^{s/d}} \int_Q \rho_\Psi^{1+2s/d} + C_{\rm GN} \frac{ (1-\delta)(1-\delta^{s/d})}{(1+\delta)^{2s/d}} \sum_{K\subset G^{n,1}}\int_{\Omega_K} \rho_\Psi^{1+2s/d}   \Big) \\
&\quad +  \sum_{n\ge 1}  \Big(  \frac{\lambda}{C_\delta } - C\delta^{s/d} \Big) \frac{1}{\eps^{2sn} } \sum_{Q\in G^{n,0}}\int_Q \rho_\Psi + \sum_{n\ge 1}  \Big(  \frac{\lambda}{C_\delta} - C_\delta \Big) \frac{1}{\eps^{2sn} } \sum_{K \subset  G^{n,1}}\int_{\wO_K} \rho_\Psi .
 \end{align*}
 For any given $\delta\in (0,1)$, we can choose $\lambda>0$ sufficiently large such that
 $$
 \frac{\lambda}{C_\delta } - C\delta^{s/d}  \ge 0, \quad  \frac{\lambda}{C_\delta } - C_\delta  \ge 0.
 $$
 Then the above estimate reduces to 
 \begin{align*}
  &\Big\langle \Psi, \sum_{i=1}^N (-\Delta_{x_i})^s_{\R^d}  \Psi\Big\rangle +  \lambda \sum_{i=1}^N \Big\langle \Psi, \sum_{1\le i<j\le N} \frac{1}{|x_i-x_j|^{2s} }  \Psi\Big\rangle \\
  & \ge  \sum_{n\ge 1} \Big( \sum_{Q\in G^{n,0}} \frac{1}{C\delta^{s/d}} \int_Q \rho_\Psi^{1+2s/d} + C_{\rm GN} \frac{ (1-\delta)(1-\delta^{s/d})}{(1+\delta)^{2s/d}} \sum_{K\subset G^{n,1}}\int_{\Omega_K} \rho_\Psi^{1+2s/d}   \Big)\\
  &\ge \min\Big\{\frac{1}{C\delta^{s/d}}, C_{\rm GN} \frac{ (1-\delta)(1-\delta^{s/d})}{(1+\delta)^{2s/d}} \Big\}  \sum_{n\ge 1} \Big( \sum_{Q\in G^{n,0}} \int_Q \rho_\Psi^{1+2s/d} + \sum_{K\subset G^{n,1}}\int_{\Omega_K} \rho_\Psi^{1+2s/d}   \Big)\\
  &= \min\Big\{\frac{1}{C\delta^{s/d}}, C_{\rm GN} \frac{ (1-\delta)(1-\delta^{s/d})}{(1+\delta)^{2s/d}} \Big\}  \int_{\R^d} \rho_\Psi^{1+2s/d} .
 \end{align*}
 In the last equality we have used the covering property \eqref{eq:inclusion}.

Thus for every $\delta\in (0,1)$, with $\lambda>0$ sufficiently large we have 
$$
C_{\rm LT}(\lambda) \ge \min\Big\{\frac{1}{C\delta^{s/d}}, C_{\rm GN} \frac{ (1-\delta)(1-\delta^{s/d})}{(1+\delta)^{2s/d}} \Big\}. 
$$
This implies that
$$
\liminf_{\lambda\to \infty} C_{\rm LT}(\lambda)  \ge C_{\rm GN}. 
$$
Together with the known upper bound $C_{\rm LT}(\lambda)\le C_{\rm GN}$ (see \cite[Proposition 10]{LuNaPo-16}), we conclude that $C_{\rm LT}(\lambda) \to  C_{\rm GN}$ as $\lambda\to \infty$. This completes the proof of Theorem \ref{thm-main1}. 
\end{proof}

\section{Proof of Theorem \ref{thm-main2}} \label{sec7}

First, we adapt the local uncertainty principle for Hardy operator $(-\Delta_{x_i})^s - \mathcal{C}_{s,d} |x|^{-2s}$.

\begin{lemma}[Local uncertainty principle for Hardy operator] \label{lem:LUP-H} Let $d \geq 1$, $s>0$. Let $\Psi$ be a normalized wave function in $L^2(\R^{dN})$. Then for any cube $Q \subset \R^d$ centered at $0$ we have
\begin{align} \label{eq:uncertainty-H-1}
	\Big\langle \Psi, \sum_{i=1}^N ((-\Delta_{x_i})^s_{|Q}  - \mathcal{C}_{s,d} |x_i|^{-2s}\1_{Q}(x_i))\Psi \Big\rangle \ge \frac{1}{C} \frac{\int_Q \rho_\Psi^{1+2s/d} }{\Big( \int_Q \rho_\Psi \Big)^{2s/d}}  - \frac{C}{|Q|^{2s/d}} \int_Q \rho_\Psi. 
\end{align}
The constant $C$ is indepdent of $Q,\Psi,N$. Moreover, for two domains $\Omega \subset\subset \widetilde \Omega \subset \R^d$ where $\wO$ is a $s$-extension domain, we have
\begin{align} \label{eq:uncertainty-H-2}
	\Big\langle \Psi, \sum_{i=1}^N ((-\Delta_{x_i})^s_{|\wO} - \mathcal{C}_{s,d} |x_i|^{-2s}\1_{\Omega}(x_i) ) \Psi \Big\rangle \ge C_{\rm HGN} (1-\delta)\frac{\int_\Omega \rho_\Psi^{1+2s/d} }{\Big( \int_{\wO} \rho_\Psi \Big)^{2s/d}}  - C_{\delta,\Omega,\wO} \int_{\wO}\rho_\Psi
\end{align}
for any $\delta\in (0,1)$. The constant $C_{\delta,\Omega,\wO}>0$ is independent of $\Psi,N$ and it scales as 
	$$
	C_{\delta,\Omega,\wO}= C_{\delta,L\Omega,L\wO} L^{2s}, \quad \forall L>0. 
	$$
\end{lemma}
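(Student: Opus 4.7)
The proof has the same architecture as the proofs of Lemmas \ref{lem:LUP} and \ref{lem:LUP-II}, with the sharp Hardy--Gagliardo--Nirenberg (HGN) inequality (the defining property of $C_{\rm HGN}$) used in place of the ordinary Gagliardo--Nirenberg inequality. The many-body statements reduce to one-body analogues via the density-matrix decomposition $\gamma_\Psi=\sum_n|u_n\rangle\langle u_n|$ opening the proof of Lemma \ref{lem:LUP-II}: the Hardy potential $-\mathcal{C}_{s,d}|x|^{-2s}\1_\Omega$ is a one-body multiplication operator and factors through this decomposition, and H\"older's inequality for sums then lifts the one-body bound to the many-body form involving $\rho_\Psi$.

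For the one-body version of \eqref{eq:uncertainty-H-2} we choose smooth cutoffs $\chi,\eta$ with $\chi^2+\eta^2=1$, $\chi\equiv 1$ on $\Omega$ and $\supp\chi\subset\subset\wO$. The extension of $\chi u$ by zero to $\R^d$ lies in $H^s(\R^d)$, so the defining inequality of $C_{\rm HGN}$ yields
\begin{align*}
\|\chi u\|_{\dot H^s(\R^d)}^2 - \mathcal{C}_{s,d}\int_{\R^d}\frac{|\chi u|^2}{|x|^{2s}} \ge C_{\rm HGN}\,\frac{\int_\Omega |u|^{2(1+2s/d)}}{\bigl(\int_\wO|u|^2\bigr)^{2s/d}},
\end{align*}
where we have used $\chi\equiv 1$ on $\Omega$ and $\int_{\R^d}|\chi u|^2\le \int_\wO|u|^2$. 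The fractional IMS formula together with the Sobolev-norm comparison from Steps 2--3 of Lemma \ref{lem:LUP-II}'s proof gives $\|u\|^2_{\dot H^s(\wO)}\ge (1-O(\delta))\|\chi u\|^2_{\dot H^s(\R^d)} - C_\delta\|u\|^2_{L^2(\wO)}$. Combining produces the desired inequality modulo a residual term of the form $-O(\delta)\mathcal{C}_{s,d}\int_\Omega|x|^{-2s}|u|^2$ arising from the multiplicative loss $(1-O(\delta))<1$ in the norm comparison.

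The main obstacle is absorbing this residual without degrading the $C_{\rm HGN}$ prefactor. When $0\notin\bar\Omega$ the potential $|x|^{-2s}\1_\Omega$ is bounded and the residual fits directly into $C_{\delta,\Omega,\wO}\int_\wO|u|^2$. When $0\in\Omega$, the same cutoff-and-extension analysis combined with the sharp Hardy inequality on $\R^d$ applied to $\chi u$ yields a non-sharp local Hardy inequality $\|u\|^2_{\dot H^s(\wO)}\ge(1-O(\delta))\mathcal{C}_{s,d}\int_\Omega|x|^{-2s}|u|^2 - C_\delta\|u\|^2_{L^2(\wO)}$; using this to dominate the residual in $\|u\|^2_{\dot H^s(\wO)}$ and rearranging yields the claimed bound with prefactor $(1-\delta)C_{\rm HGN}$ after relabeling $\delta$. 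The inequality \eqref{eq:uncertainty-H-1} is the easier non-sharp version: a single cutoff of $Q$ supported inside $2Q$ combined with the non-sharp Hardy--Sobolev embedding on $\R^d$ (which is automatic from the HGN inequality and does not require tracking the sharp constant) yields the result, with error terms scaling as $|Q|^{-2s/d}\int_Q\rho_\Psi$. The scaling identity $C_{\delta,\Omega,\wO}=L^{2s}C_{\delta,L\Omega,L\wO}$ follows from the change of variable $x\mapsto x/L$ and the homogeneity of $\dot H^s$, the Hardy potential, and the $L^{1+2s/d}$ density term.
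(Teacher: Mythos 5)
Your reduction to a one-body inequality via the density matrix decomposition, your use of cutoffs $\chi,\eta$, and your appeal to the defining property of $C_{\rm HGN}$ applied to $\chi u$ are all consistent with the paper, as is the scaling remark. The gap is in the absorption of the residual Hardy term, and it is a real one, not a technicality.

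Your plan is to compare $\norm{u}^2_{\dot H^s(\wO)}$ with $\norm{\chi u}^2_{\dot H^s(\R^d)}$ up to a multiplicative factor $(1-O(\delta))$, which upon inserting the HGN inequality for $\chi u$ produces a leftover $-O(\delta)\,\mathcal{C}_{s,d}\int_\Omega |x|^{-2s}|u|^2$, and then to absorb this by a ``non-sharp local Hardy inequality'' $\norm{u}^2_{\dot H^s(\wO)}\geq (1-O(\delta))\mathcal{C}_{s,d}\int_\Omega |x|^{-2s}|u|^2 - C_\delta\norm{u}_{L^2(\wO)}^2$. This fails for a structural reason: that local Hardy bound is obtained from \emph{the same} cutoff comparison and therefore suffers \emph{the same} multiplicative loss. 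Writing $A=\norm{u}^2_{\dot H^s(\wO)}$, $B=\int_\Omega|x|^{-2s}|u|^2$, $E$ for the HGN ratio, $D=\norm{u}_{L^2(\wO)}^2$, any convex split $A=(1-\alpha)A+\alpha A$ followed by your two bounds yields
\begin{align*}
A-\mathcal{C}_{s,d}B \;\ge\; (1-\alpha)(1-O(\delta))E \;+\;\bigl[(1-O(\delta))-1\bigr]\mathcal{C}_{s,d}B \;-\;C_\delta D ,
\end{align*}
because the coefficients $(1-\alpha)(1-O(\delta))+\alpha(1-O(\delta))$ collapse back to $(1-O(\delta))$ regardless of $\alpha$; the residual $-O(\delta)\mathcal{C}_{s,d}B$ regenerates. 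Nor can one solve the local Hardy bound for $B$ and substitute: that introduces a term $-\tfrac{\delta}{1-\delta}A$ on the right, and $A$ is not controlled by $E$ and $D$. The underlying obstruction is that $\mathcal{C}_{s,d}$ is \emph{sharp}: the quadratic form $(-\Delta)^s-(1+\eps)\mathcal{C}_{s,d}|x|^{-2s}$ is unbounded below, so no argument that discards even a multiplicative $(1-\delta)$ fraction of the Hardy term can recover it.

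The paper avoids the issue by never taking a multiplicative loss in the norm comparison: Steps 2--3 of the proof produce the purely additive bound $\norm{\chi u}^2_{\dot H^s(\R^d)} \leq \norm{u}^2_{\dot H^s(\wO)} + C\norm{\chi u}^2_{\dot H^{t_1}(\R^d)} + C\norm{\chi u}^2_{\dot H^{t_2}(\R^d)} + C\norm{u}_{L^2(\wO)}^2$ with $t_1,t_2<s$, so the full $\norm{\chi u}^2_{\dot H^s(\R^d)}$ is retained and the sharp Hardy constant is never touched. The lower-order Sobolev errors are then absorbed by an extra resource that your argument does not use: the Solovej--S\o rensen--Spitzer / Frank \emph{improvement} of the Hardy inequality, $(-\Delta)^s-\mathcal{C}_{s,d}|x|^{-2s}\ge \ell^{s-t}(-\Delta)^t - C_{d,s,t}\ell^{s}$. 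Applying this to a $\delta$-fraction of the Hardy form of $\chi u$ yields a term $\delta^{-1}\norm{\chi u}^2_{\dot H^t(\R^d)}$ with $t>\max\{t_1,t_2\}$, and the Fourier-side pointwise bound $\delta^{-1}|p|^{2t}-C|p|^{2t_1}-C|p|^{2t_2}\ge -C_\delta$ then swallows the additive errors at the cost only of $C_\delta\norm{\chi u}_{L^2}^2$. Without invoking this strengthened Hardy inequality (or something equivalent) there is no source from which to pay for the lower-order errors while preserving the full $\mathcal{C}_{s,d}$, and your proof cannot close.

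For \eqref{eq:uncertainty-H-1}, the paper simply cites \cite[Lemma~13]{LuNaPo-16}. Your proposed ``cutoff of $Q$ supported in $2Q$'' route has a related difficulty: the statement uses the restricted seminorm $\norm{u}^2_{\dot H^s(Q)}$ only, with no larger set $\wO$ available, so a cutoff supported in $2Q$ forces you to control contributions from $2Q\setminus Q$ that you have no budget for. This bound too is not a routine corollary of a global Hardy--Sobolev inequality, which is why the paper defers to the external reference.
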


\begin{proof} The first bound \eqref{eq:uncertainty-H-1} is taken from \cite{LuNaPo-16} (\cite[Lemma 13]{LuNaPo-16} contains the one-body version and the $N$-body version follows from a general argument explained in the proof of Lemma \ref{lem:LUP-II}). 

For the second bound \eqref{eq:uncertainty-H-2}, by following the proof of Lemma \ref{lem:LUP-II}, we only need to prove the following one-body counterpart: for any $u\in H^s(\wO)$  and $\delta\in (0,1)$,
	\begin{equation} \label{eq-local-uncertainty-II-one-body-H}
	\|u\|_{\dot{H}^s(\wO)}^2 - \mathcal{C}_{s,d} \int_{\Omega} \frac{|u(x)|^2}{|x|^{2s}} dx  \ge C_{\rm HGN} (1-\delta)\frac{\int_\Omega |u|^{2(1+2s/d)} }{\Big( \int_{\wO} |u|^2 \Big)^{2s/d}}  - C_{\delta,\Omega,\wO} \int_{\wO} |u|^2. 
	\end{equation}
Note that the scaling property $C_{\delta,\Omega,\wO}=C_{\delta,L\Omega,L\wO} L^{2s}$ follows from a simple change of variables. In the following, we will prove \eqref{eq-local-uncertainty-II-one-body-H} for fixed $(\Omega,\wO)$, and hence we will write $C_\delta$ instead of $C_{\delta,\Omega,\wO}$ for simplicity. 
\bigskip

\noindent
{\bf Step 1.} We start by proceeding as in the proof of   \eqref{eq-local-uncertainty-II-one-body}. Let $\chi, \eta: \R^d \to [0,1]$ be two smooth functions such that 
$$\chi^2+\eta^2=1, \quad \chi(x)=1 \text{ if } x\in \Omega, \quad {\rm supp}\chi \subset\wO.$$
By the definition 
\begin{align*}
	C_{\rm HGN} := \inf_{\substack{u \in H^s(\R^d)\\ 
			\norm{u}_{L^2} = 1}} \frac{\Big\langle u, \Big((-\Delta)^s - \mathcal{C}_{s,d} |x|^{-2s} \Big) u \Big\rangle}{\int_{\R^d} |u|^{2 (1+\frac{2s}{d})}}.
  \end{align*}	  
  we have
\begin{align} \label{eq:ABC-H-1a}
\|\chi u\|^2_{\dot H^s(\R^d)} - \mathcal{C}_{s,d} \int_{\R^d} \frac{|\chi u|^2}{|x|^{2s}} d x \ge  C_{\rm HGN} \frac{\int_{\R^d} |\chi u|^{2(1+2s/d)} }{\Big( \int_{\R^d} |\chi u|^2 \Big)^{2s/d}},\quad \forall u\in H^s(\R^d).
\end{align}
Moreover, we will use the following powerful improvement of Hardy's inequality: for all $s>t>0$ and $\ell>0$,
$$
 (-\Delta)^s - \frac{\mathcal{C}_{s,d}}{|x|^{2s}} \ge  \ell^{s-t}(-\Delta)^t -C_{d,s,t}\ell^{s} \quad \text{on}~ L^2(\R^d).
$$
This bound was first proved for $s=1/2$, $d=3$ 
by Solovej, S\o rensen and Spitzer  \cite[Lemma 11]{SolSorSpi-10} and then 
generalized to the full range $0<s<d/2$ by Frank \cite[Theorem 1.2]{Frank-09}. Consequently,  for any fixed $s>t>0$ and  $\delta \in (0,1)$ we have
\begin{align} \label{eq:ABC-H-1b}
\delta\Big( \|\chi u\|^2_{\dot H^s(\R^d)} - \mathcal{C}_{s,d} \int_{\R^d} \frac{|\chi u|^2}{|x|^{2s}} d x \Big) \ge \delta^{-1}  \|\chi u\|^2_{\dot H^t(\R^d)} - C_{t,\delta} \| \chi u\|_{L^2(\R^d)}^2.
\end{align}
Multiplying \eqref{eq:ABC-H-1a} with $(1-\delta)$ and then summing with \eqref{eq:ABC-H-1b}, we deduce that for any fixed $s>t>0$ and  $\delta \in (0,1)$, 
\begin{align*} %\label{eq:ABC-H-1}
 \|\chi u\|^2_{\dot H^s(\R^d)} - \mathcal{C}_{s,d} \int_{\R^d} \frac{|\chi u|^2}{|x|^{2s}} d x &\ge (1-\delta) C_{\rm HGN} \frac{\int_{\R^d} |\chi u|^{2(1+2s/d)} }{\Big( \int_{\R^d} |\chi u|^2 \Big)^{2s/d}} + \delta^{-1} \|\chi u\|^2_{\dot H^t(\R^d)} \nn\\
 &\qquad - C_{t,\delta} \| \chi u\|_{L^2(\R^d)}^2, \quad \forall u\in H^s(\R^d).
\end{align*}
Since $\1_\Omega\le \chi \le \1_{\wO}$  the latter estimate reduces to  
\begin{align} \label{eq:ABC-H-1}
 \|\chi u\|^2_{\dot H^s(\R^d)} - \mathcal{C}_{s,d} \int_{\Omega} \frac{|\chi u|^2}{|x|^{2s}} d x &\ge (1-\delta) C_{\rm HGN} \frac{\int_{\Omega} | u|^{2(1+2s/d)} }{\Big( \int_{\wO} |\chi u|^2 \Big)^{2s/d}} + \delta^{-1} \|\chi u\|^2_{\dot H^t(\R^d)} \nn\\
 &\qquad - C_{t,\delta} \| u\|_{L^2(\wO)}^2, \quad \forall u\in H^s(\R^d).
\end{align}

\bigskip

\noindent
{\bf Step 2.} Now let us compare $\|\chi u\|^2_{\dot H^s(\R^d)}$ with $\|u\|^2_{\dot H^s(\wO)}$. 

We can compare $\|\chi u\|^2_{\dot H^s(\R^d)}$ with $\|\chi u\|^2_{\dot H^s(\wO)}$ as in the proof of \eqref{eq:ABC-2}. Recall that if $s\in \mathbb{N}$, then $\|\chi u\|^2_{\dot H^s(\R^d)}=\|\chi u\|^2_{\dot H^s(\wO)}$ since $\supp (\chi u)\in \wO$. If $s=m+\sigma$ with $m\in \mathbb{N}$ and $0<\sigma<1$, then from proof of \eqref{eq:ABC-2}, we find that 
\begin{align*}
	 \norm{\chi u}_{\dot{H}^s(\wO)}^2 \le \norm{\chi u}_{\dot{H}^s(\R^d)}^2 \le \norm{\chi u}_{\dot{H}^s(\wO)}^2 + C \norm{\chi u}_{\dot{H}^m(\wO)}^2 =  \norm{\chi u}_{\dot{H}^s(\wO)}^2 + C \norm{\chi u}_{\dot{H}^m(\R^d)}^2. 
	\end{align*}
	Thus in summary, for any $s>0$ we can find $0\leq t_1<s$ such that
	\begin{align} \label{eq:ABC-H-2a}
	\norm{\chi u}_{\dot{H}^s(\wO)}^2 \le \norm{\chi u}_{\dot{H}^s(\R^d)}^2 \le \norm{\chi u}_{\dot{H}^s(\wO)}^2 + C \norm{\chi u}_{\dot{H}^{t_1}(\R^d)}^2.
	\end{align}

Next, we compare $\|\chi u\|^2_{\dot H^s(\wO)}$ with $\|u\|^2_{\dot H^s(\wO)}$ as in the proof of \eqref{eq:ABC-3}. Recall that by the IMS formula in \cite[Lemma 14]{LuNaPo-16}, we can find $0<t_2<s$ such that  
$$
	\abs{\norm{u}_{\dot{H}^s(\wO)}^2 - \norm{\chi u}_{\dot{H}^s(\wO)}^2-\norm{\eta u}_{\dot{H}^s(\wO)}^2}
	\leq C \left(\norm{\chi u}_{H^{t_2}(\wO)}^2+\norm{\eta u}_{H^{t_2}(\wO)}^2 \right). 
$$
	Moreover, thanks to \eqref{eq-norm-equivalence-II} we can estimate further 
	$$
	C \norm{\eta u}_{H^{t_2}(\wO)}^2 \le \norm{\eta u}_{\dot{H}^s(\wO)}^2 +  C' \norm{\eta u}_{L^2(\wO)}^2
	$$
	and
	$$
	C\norm{\chi u}_{{H}^{t_2}(\wO)}^2 \le C' \norm{\chi u}_{\dot{H}^{t_2}(\wO)}^2 + C'  \norm{\chi u}_{L^2(\wO)}^2 \le C ' \norm{\chi u}_{\dot{H}^{t_2}(\R^d)}^2 + C'  \norm{\chi u}_{L^2(\wO)}^2. 
	$$
Thus
$$
	\abs{\norm{u}_{\dot{H}^s(\wO)}^2 - \norm{\chi u}_{\dot{H}^s(\wO)}^2-\norm{\eta u}_{\dot{H}^s(\wO)}^2}
	\leq  \norm{\eta u}_{\dot{H}^s(\wO)}^2  + C \norm{\chi u}_{\dot{H}^{t_2}(\R^d)}^2 + C  \norm{u}_{L^2(\wO)}^2
$$
for a constant $C$ independent of $u$. By the triangle inequality, we find that 
	\begin{align} \label{eq:ABC-H-2b}
	\norm{\chi u}_{\dot{H}^s(\wO)}^2 \le \norm{u}_{\dot{H}^s(\wO)}^2 + C \norm{\chi u}_{\dot{H}^{t_2}(\R^d)}^2 + C  \norm{u}_{L^2(\wO)}^2. 
	\end{align}
	Combining \eqref{eq:ABC-H-2a} and \eqref{eq:ABC-H-2b} we conclude that
	\begin{align} \label{eq:ABC-H-2}
	\norm{\chi u}_{\dot{H}^s(\R^d)}^2 \le \norm{u}_{\dot{H}^s(\wO)}^2 + C\norm{\chi u}_{\dot{H}^{t_1}(\R^d)}^2  + C\norm{\chi u}_{\dot{H}^{t_2}(\R^d)}^2 + C \| u\|_{L^2(\wO)}^2
	\end{align}
	for some constants $t_1,t_2\in (0,s)$. The constant $C$ is independent of $u$.

\bigskip

\noindent
{\bf Step 3.} Finally, we deduce from \eqref{eq:ABC-H-1} and \eqref{eq:ABC-H-2} that 
\begin{align} \label{eq:ABC-H-3a}
 \|u\|^2_{\dot H^s(\wO)} - \mathcal{C}_{s,d} \int_{\Omega} \frac{|\chi u|^2}{|x|^{2s}} d x &\ge (1-\delta) C_{\rm HGN} \frac{\int_{\Omega} |u|^{2(1+2s/d)} }{\Big( \int_{\wO} |\chi u|^2 \Big)^{2s/d}} + \delta^{-1} \|\chi u\|^2_{\dot H^t(\R^d)}\nn\\
 &\qquad - C\norm{\chi u}_{\dot{H}^{t_1}(\R^d)}^2  - C\norm{\chi u}_{\dot{H}^{t_2}(\R^d)}^2  - C_{t,\delta} \| u\|_{L^2(\wO)}^2. 
\end{align}
This bound holds for all $t\in (0,s)$. Therefore, we can choose $\max\{t_1,t_2\} < t<s$ and use the pointwise estimate 
$$ \delta^{-1} |p|^{2t} -C|p|^{2t_1} - C|p|^{2t_2} \ge - C_\delta, \quad \forall p\in \R^d$$
in the Fourier space to get
$$
\delta^{-1} \|\chi u\|^2_{\dot H^t(\R^d)} - C\norm{\chi u}_{\dot{H}^{t_1}(\R^d)}^2  - C\norm{\chi u}_{\dot{H}^{t_2}(\R^d)}^2 \ge - C_\delta \|\chi u\|_{L^2(\R^d)}^2 \ge - C_\delta \|u\|_{L^2(\wO)}^2. 
$$ 
Thus \eqref{eq:ABC-H-3a} reduces to 
\begin{align} \label{eq:ABC-H-3b}
 \|u\|^2_{\dot H^s(\wO)} - \mathcal{C}_{s,d} \int_{\Omega} \frac{|\chi u|^2}{|x|^{2s}} d x &\ge (1-\delta) C_{\rm HGN} \frac{\int_{\Omega} |u|^{2(1+2s/d)} }{\Big( \int_{\wO} |\chi u|^2 \Big)^{2s/d}} - C_{\delta} \| u\|_{L^2(\wO)}^2
\end{align}
for all $\delta \in (0,1)$ and $u\in H^s(\R^d)$. The constant $C_\delta$ depends on $\delta, \Omega,\wO$, but it is independent of $u$. Thus \eqref{eq-local-uncertainty-II-one-body-H} holds true. This completes the proof of Lemma \ref{lem:LUP-H}. 
\end{proof}

We are ready to provide

\begin{proof}[Proof of Theorem \ref{thm-main2}] Again we can assume that the normalized wave function $\Psi\in L^2(\R^{dN})$ is smooth and supported in $[0,1]^{dN}$. 

  \bigskip
  \noindent 
 {\bf Step 1: Covering sub-cubes.}  We fix constants $\delta\in (0,1)$ and  $\eps=n_0^{-1}$ with an {\em odd} integer number $n_0\ge 3$ (we can choose $n_0=3$).  We construct the collections of sub-cubes 
  $$G^{n}=G^{n,0}\bigcup G^{n,1}\bigcup G^{n,2}$$
exactly as in the proof of Theorem \ref{thm-main1}. Thus as in \eqref{eq:inclusion}, $\supp \rho_\Psi$ is covered by disjoint sub-cubes: 
\begin{align} \label{eq:inclusion-1}
	\supp \rho_\Psi \subset [0,1]^d = \bigcup_{n\ge 1} \Big(  \bigcup_{Q\in G^{n,0} \cup G^{n,1}} Q \Big) .
\end{align}
The choice $\eps=n_0^{-1}$ with $n_0$ odd gives us an additional property: for any sub-cube $Q\in G^{n,0} \cup G^{n,1}$, either  $0$ is the center of $Q$, or 
\begin{align}
{\rm dist}(0,Q) \ge \frac{|Q|^{1/d}}{2}= \frac{\eps^n}{2}. \label{eq:dist-0-Q}
\end{align}

	\bigskip
  \noindent 
 {\bf Step 2: Uncertainty principle I.}  We prove that for any sub-cube $Q\in G^{n,0} \cup G^{n,1}$, 
\begin{align} \label{eq:H-final-1a}
	\Big\langle \Psi, \sum_{i=1}^N \Big( (-\Delta_{x_i})^{s}_{|Q} -\mathcal{C}_{s,d} |x_i|^{-2s} \1_Q(x_i) \Big) \Psi \Big\rangle &\ge \frac{1}{C} \frac{\int_Q \rho_\Psi^{1+2s/d} }{\Big( \int_Q \rho_\Psi \Big)^{2s/d}}  - \frac{C}{|Q|^{2s/d}} \int_Q \rho_\Psi.
	\end{align}
	Indeed,  if $0$ is the center of $Q$, then \eqref{eq:H-final-1a} is exactly the first bound \eqref{eq:uncertainty-H-1} in Lemma \ref{lem:LUP-H}.  Otherwise, if $0\notin Q$, then using \eqref{eq:dist-0-Q} we have
\begin{align*} 
	\Big\langle \Psi, \sum_{i=1}^N \Big( (-\Delta_{x_i})^{s}_{|Q} -\mathcal{C}_{s,d} |x_i|^{-2s} \1_Q(x_i) \Big) \Psi \Big\rangle &=  \Big\langle \Psi, \sum_{i=1}^N (-\Delta_{x_i})^{s}_{|Q}  \Psi \Big\rangle - \mathcal{C}_{s,d}  \int_{Q} \frac{\rho_\Psi(x)}{|x|^{2s}} dx \\
	&\ge \Big\langle \Psi, \sum_{i=1}^N (-\Delta_{x_i})^{s}_{|Q}  \Psi \Big\rangle - \mathcal{C}_{s,d}  \frac{2^{2s}}{ |Q|^{2s/d}} \int_{Q}  \rho_\Psi, 
	\end{align*}
and hence \eqref{eq:H-final-1a} follows from Lemma \ref{lem:LUP}. Thus \eqref{eq:H-final-1a} always holds true. 
	
From \eqref{eq:H-final-1a} and the covering property \eqref{eq:inclusion-1} we find that 
\begin{align*} 
	&\Big\langle \Psi, \sum_{i=1}^N \Big( (-\Delta_{x_i})^{s}_{|\R^d} - \mathcal{C}_{s,d}|x_i|^{-2s} \Big) \Psi \Big\rangle \\
	&\ge  \sum_{n\ge 1} \sum_{Q\in G^{n,0}\cup G^{n,1}}  \Big\langle \Psi, \sum_{i=1}^N \Big( (-\Delta_{x_i})^{s}_{|Q} -  \mathcal{C}_{s,d}|x_i|^{-2s} \1_Q(x_i)\Big) \Psi \Big\rangle \nn\\
	& \ge \sum_{n\ge 1} \sum_{Q\in G^{n,0}\cup G^{n,1}} \Big( \frac{1}{C} \frac{\int_Q \rho_\Psi^{1+2s/d} }{\Big( \int_Q \rho_\Psi \Big)^{2s/d}}  - \frac{C}{|Q|^{2s/d}} \int_Q \rho_\Psi \Big). 
\end{align*}
Recall that  $|Q|=\eps^n$ for all $Q\in G^{n,0}\cup G^{n,1}$. Moreover, if $Q\in G^{n,0}$, then  $\int_Q\rho_\Psi \le \delta$. Therefore, we obtain the lower bound
\begin{align} \label{eq:H-final-1}
	 \Big\langle \Psi, \sum_{i=1}^N \Big( (-\Delta_{x_i})^{s}_{|\R^d} - \mathcal{C}_{s,d}|x_i|^{-2s} \Big) \Psi \Big\rangle &\ge \sum_{n\ge 1} \sum_{Q\in G^{n,0}} \frac{1}{C\delta^{2s/d}}  \int_Q \rho_\Psi^{1+2s/d} \nn\\
	 &\qquad -  \sum_{n\ge 1} \sum_{Q\in G^{n,0}\cup G^{n,1}}  \frac{C}{\eps^{2sn}} \int_Q \rho_\Psi . 
\end{align}

	\bigskip
  \noindent 
 {\bf Step 3: Uncertainty principle II.}  We prove that for any cluster $K \subset G^{n,1}$,  
 \begin{align} \label{eq:H-final-2aa}
	&\Big\langle \Psi, \sum_{i=1}^N \Big( (-\Delta_{x_i})^{s}_{|\wO_K} -\mathcal{C}_{s,d} |x_i|^{-2s} \1_{\Omega_K}(x_i) \Big) \Psi \Big\rangle \nn\\
	&\ge C_{\rm HGN} (1-\delta) \frac{\int_{\Omega_K} \rho_\Psi^{1+2s/d} }{\Big( \int_{\wO_K} \rho_\Psi \Big)^{2s/d}}  - \frac{C_\delta}{\eps^{2sn}} \int_{\wO_K} \rho_\Psi.
	\end{align}
We distinguish two cases. 

{\bf Case 1.} If $0\in \Omega_K$, then as argued  in the proof of Theorem \ref{thm-main1} (Step 3), $\eps^{-n} \Omega_K$ is a union of at most $(\delta^{-1}+1)$ disjoint unit cubes in $\R^d$ which are graphically connected (recall that a cube $Q_1$ neighbours to  a cube $Q_2$ if ${\rm dist}(Q_1,Q_2)=0$). Moreover, we know additionally that $0$ is the center of one of these sub-cubes.  Therefore,  $\eps^{-n} \Omega_K$ and $\eps^{-n} \wO_K$  belong to a finite collection of subsets of $\R^d$ which depends only on $d, \delta$. Thus we can use the second bound \eqref{eq:uncertainty-H-2} in Lemma \ref{lem:LUP-H} with the constant  $C_{\delta,\eps^{-n} \Omega_K,\eps^{-n} \wO_K}$ depending only on $d,s,\delta$, which leads immediately to \eqref{eq:H-final-2aa}. 

\medskip

 {\bf Case 2.} If $0\notin \Omega_K$, then using \eqref{eq:dist-0-Q} we have 
$$
|x|^{-2s} \le \frac{2^{2s}}{\eps^{2sn}}, \quad \forall x\in \Omega_K.
$$
Thus
\begin{align*}
	\Big\langle \Psi, \sum_{i=1}^N ((-\Delta_{x_i})^s_{|\wO} - \mathcal{C}_{s,d} |x_i|^{-2s}\1_{\Omega_K}(x_i) ) \Psi \Big\rangle \ge \Big\langle \Psi, \sum_{i=1}^N (-\Delta_{x_i})^s_{|\wO}  \Psi \Big\rangle - \mathcal{C}_{s,d} \frac{2^{2s}}{\eps^{2sn}} \int_{\Omega_K} \rho_\Psi
	\end{align*}
and hence \eqref{eq:H-final-2aa} follows from Lemma \ref{lem:LUP-II} (of course we have $C_{\rm GN} \ge C_{\rm HGN}$). 

Thus in both cases, \eqref{eq:H-final-2aa} always holds true. From \eqref{eq:H-final-2aa}, we use  
$$
\int_{\wO_K} \rho_\Psi \le 1+\delta, \quad \forall K\in G^{n,1}, \quad \forall n\ge 1
$$
and then sum  over all clusters. This gives 
 \begin{align} \label{eq:H-final-2}
&\sum_{n\ge 1} \sum_{K\subset G^{n,1}} \Big\langle \Psi, \sum_{i=1}^N \Big( (-\Delta_{x_i})^{s}_{|\wO_K} -\mathcal{C}_{s,d} |x_i|^{-2s} \1_{\Omega_K}(x_i) \Big) \Psi \Big\rangle \nn\\
&\ge \sum_{n\ge 1} \sum_{K\subset G^{n,1}} \Big( C_{\rm HGN} \frac{1-\delta}{(1+\delta)^{2s/d}} \int_{\Omega_K} \rho_\Psi^{1+2s/d}  - \frac{C_\delta}{\eps^{2sn}} \int_{\wO_K} \rho_\Psi \Big) 
	\end{align}

\bigskip
  \noindent 
 {\bf Step 4: Uncertainty principle III.}  Note that \eqref{eq:H-final-2} allows us to control the negative potential $-\mathcal{C}_{s,d} |x|^{-2s}$ on the clusters $K\subset G^{n,1}$, but we used  a bit more than the kinetic energy in $\Omega_K$ (we used $(-\Delta)^{s}_{|\wO_K}$).  Therefore, to complement for \eqref{eq:H-final-2} we have to deal with the  negative potential in the cubes $Q\in G^{n,0}$ using a bit less than the kinetic energy in $Q$. To be precise, we take a sub-cube $Q\in G^{n,0}$ and distinguish two cases. 
 
{\bf Case 1.} If $0\notin Q$, then we simply use  \eqref{eq:dist-0-Q} and deduce that
\begin{align} \label{eq:H-final-3a} 
\Big\langle \Psi, \sum_{i=1}^N ( - \mathcal{C}_{s,d} |x_i|^{-2s} \1_{Q}(x_i) ) \Psi \Big\rangle = - \mathcal{C}_{s,d} \int_Q \frac{\rho_\Psi}{|x|^{2s}} dx \ge  - \mathcal{C}_{s,d} \frac{2^{2s}}{|Q|^{2s/d}}  \int_Q \rho_\Psi. 
\end{align}

 {\bf Case 2.} If $0\in Q$, then $0$ is the center of $Q$. Denote  
 $$
 Q_0 := \frac{1}{4\sqrt{d}} Q = \left\{  \frac{1}{4\sqrt{d}}x \, |\, x\in Q\right\}. 
 $$
 Then using the first bound \eqref{eq:uncertainty-H-1} in Lemma \ref{lem:LUP-H}, we have
$$
\Big\langle \Psi, \sum_{i=1}^N ( (-\Delta_{x_i})^{s}_{|Q_0}- \mathcal{C}_{s,d} |x_i|^{-2s} \1_{Q_0}(x_i) ) \Psi \Big\rangle \ge  -  \frac{C}{|Q_0|^{2s/d}}  \int_{Q_0} \rho_\Psi. 
$$
 (Here we do not need the first term on the right side of \eqref{eq:uncertainty-H-1}.) Combining with 
 $$
 |x| \ge \frac{1}{8\sqrt{d}} |Q|^{1/d}, \quad \forall x \in Q\backslash Q_0
 $$
 we find that 
\begin{align} \label{eq:H-final-3bb} 
\Big\langle \Psi, \sum_{i=1}^N ( (-\Delta_{x_i})^{s}_{|Q_0}- \mathcal{C}_{s,d} |x_i|^{-2s} \1_{Q}(x_i) ) \Psi \Big\rangle \ge  -  \frac{C}{|Q|^{2s/d}}  \int_{Q} \rho_\Psi. 
\end{align}
Next, let us show that the set $Q_0$ is disjoint with $\wO_K$ for any cluster $K \subset G^{m,1}$ for any $m\ge 1$. Indeed, for any $Q'\in K\subset G^{m,1}$, using $0\notin Q'$ and $Q\cap Q'=\emptyset$ we obtain 
 $$
|x|  \ge \max\Big\{ \frac{1}{2}|Q'|^{1/d}, \frac{1}{2} |Q|^{1/d} \Big\} \ge \frac{1}{4} |Q'|^{1/d} + \frac{1}{4} |Q|^{1/d}, \quad \forall x\in Q'. 
 $$
 By the definition of the closure $\wO_K$ and the triangle inequality, we deduce that
  $$
|x|  \ge  \frac{1}{4} |Q|^{1/d}, \quad \forall x\in \wO_K.
 $$
On the other hand,
$$
|x| \le \frac{\sqrt{d}}{2} |Q_0|^{1/d} = \frac{1}{8} |Q|^{1/d},\quad \forall x\in Q_0. 
$$
Thus $Q_0$ is disjoint with $\wO_K$ for any cluster $K \subset G^{m,1}$ for any $m\ge 1$. Consequently, 
$$
(-\Delta)_{\R^d} \ge (-\Delta)^{s}_{|Q_0} + \sum_{m\ge 1} \sum_{K\subset G^{m,1}} (-\Delta)^{s}_{|\wO_K}. 
$$
Hence, from \eqref{eq:H-final-3bb} we conclude that if $0\in Q\in G^{n,0}$ for some $n\ge 1$, then 
\begin{align} \label{eq:H-final-3b} 
\Big\langle \Psi, \sum_{i=1}^N ( (-\Delta_{x_i})^{s}_{|\R^d}- \sum_{m\ge 1} \sum_{K\subset G^{m,1}} (-\Delta_{x_i})^{s}_{|\wO_K} - \mathcal{C}_{s,d} |x_i|^{-2s} \1_{Q}(x_i) ) \Psi \Big\rangle \ge  -  \frac{C}{|Q|^{2s/d}}  \int_{Q} \rho_\Psi. 
\end{align}

{\bf Combining \eqref{eq:H-final-3a} and \eqref{eq:H-final-3b}:} Since the sub-cubes in $\bigcup_{n\ge 1} G^{n,0}$ are disjoint, there is at most one sub-cube containing $0$. Therefore, by summing over all $Q\in G^{n,0}$, $n\ge 1$ we obtain from   \eqref{eq:H-final-3a} and \eqref{eq:H-final-3b} that  
\begin{align} \label{eq:H-final-3} 
&\Big\langle \Psi, \sum_{i=1}^N \Big( (-\Delta_{x_i})^{s}_{|\R^d}- \sum_{m\ge 1} \sum_{K\subset G^{m,1}} (-\Delta_{x_i})^{s}_{|\wO_K} \Big) \Psi \Big\rangle  - \sum_{n\ge 1} \sum_{Q\in G^{n,0}} \Big\langle \Psi, \sum_{i=1}^N \mathcal{C}_{s,d} |x_i|^{-2s} \1_{Q}(x_i)  \Psi \Big\rangle \nn\\
&\ge  - \sum_{n\ge 1} \sum_{Q\in G^{n,0}} \frac{C}{|Q|^{2s/d}}  \int_{Q} \rho_\Psi. 
\end{align}

\bigskip

\noindent
{\bf Step 5: Conclusion of the lower bound.}  Summing \eqref{eq:H-final-2} and \eqref{eq:H-final-3} we get 
\begin{align} \label{eq:H-final-4}
	 \Big\langle \Psi, \sum_{i=1}^N \Big( (-\Delta_{x_i})^{s}_{|\R^d} - \mathcal{C}_{s,d}|x_i|^{-2s} \Big) \Psi \Big\rangle 	&\ge C_{\rm HGN} \frac{1-\delta}{(1+\delta)^{2s/d}}  \sum_{n\ge 1} \sum_{K\subset G^{n,1}} \int_{\Omega_K} \rho_\Psi^{1+2s/d} \nn\\
	 &\quad  -  \sum_{n\ge 1} \frac{C_\delta}{\eps^{2sn}} \Big(\sum_{Q\in G^{n,0}} \int_Q \rho_\Psi + \sum_{K \subset G^{n,1}} \int_{\wO_K} \rho_\Psi  \Big).
\end{align}
Moreover, recall the exclusion bound in \eqref{eq:final-3}:
$$
 \Big\langle \Psi, \sum_{1\le i<j\le N} \frac{\lambda}{|x_i-x_j|^{2s} }  \Psi\Big\rangle \ge  \sum_{n\ge 1} \frac{\lambda}{C_\delta \eps^{2sn}} \Big( \sum_{Q\in G^{n,0}} \int_Q \rho_\Psi +  \sum_{K \subset G^{n,1}} \int_{\wO_K} \rho_\Psi \Big)
$$
Finally, we multiply \eqref{eq:H-final-1} with $\delta^{s/d}$, multiply \eqref{eq:H-final-4} with $(1-\delta^{s/d})$, and then sum them with the above exclusion bound. For any given $\delta\in (0,1)$, we can choose $\lambda>0$ sufficiently large such that 
$$
\frac{\lambda}{C_\delta} \ge C_\delta
$$ 
namely the interaction energy from the exclusion bound dominates the error terms in \eqref{eq:H-final-1}  and \eqref{eq:H-final-4}. We thus obtain  
\begin{align} \label{eq:H-final-5}
	 &\Big\langle \Psi, \sum_{i=1}^N \Big( (-\Delta_{x_i})^{s}_{|\R^d} - \mathcal{C}_{s,d}|x_i|^{-2s} \Big) \Psi \Big\rangle 	\nn\\
	 &\ge \frac{1}{C\delta^{s/d}}\sum_{n\ge 1} \sum_{Q\subset G^{n,0}} \int_{Q} \rho_\Psi^{1+2s/d} + C_{\rm HGN} \frac{(1-\delta)(1-\delta^{s/d})}{(1+\delta)^{2s/d}} \sum_{n\ge 1} \sum_{K\subset G^{n,1}} \int_{\Omega_K} \rho_\Psi^{1+2s/d}  \nn\\
	 &\ge  \min\Big\{ \frac{1}{C\delta^{s/d}}, C_{\rm HGN} \frac{(1-\delta)(1-\delta^{s/d})}{(1+\delta)^{2s/d}}\Big\}  \sum_{n\ge 1} \Big( \sum_{Q\subset G^{n,0}} \int_{Q} \rho_\Psi^{1+2s/d} +  \sum_{K\subset G^{n,1}} \int_{\Omega_K} \rho_\Psi^{1+2s/d} \Big) \nn\\
	 &=  \min\Big\{ \frac{1}{C\delta^{s/d}}, C_{\rm HGN} \frac{(1-\delta)(1-\delta^{s/d})}{(1+\delta)^{2s/d}}\Big\} \int_{\R^d} \rho_\Psi^{1+2s/d} .
	 \end{align}
Here in the last equality we have used the covering property \eqref{eq:inclusion-1}. Thus we have proved that for any constant $\delta\in (0,1)$ and for $\lambda>0$ sufficiently large,  the optimal constant in the Hardy--Lieb--Thirring inequality satisfies
$$
C_{\rm HLT} (\lambda) \ge \min\Big\{ \frac{1}{C\delta^{s/d}}, C_{\rm HGN} \frac{(1-\delta)(1-\delta^{s/d})}{(1+\delta)^{2s/d}}\Big\}. 
$$
By taking $\lambda\to \infty$ and  then $\delta\to 0$, we get 
$$
\liminf_{\lambda\to \infty} C_{\rm HLT} (\lambda) \ge C_{\rm HGN} .
$$

\bigskip
\noindent
{\bf Step 6: Upper bound.} To conclude, let us prove that 
$$C_{\rm HLT} (\lambda) \le C_{\rm HGN}, \quad \lambda>0.$$
We construct a 2-body state as follows. Take $u,v\in C_c^\infty(\R^d)$ with $\|u\|_{L^2(\R^d)}=\|v\|_{L^2(\R^d)}=1$. Then for any $z\in \R^d$ we consider the trial state 
$$
\Psi_z(x,y) =u(x) v(y-z). 
$$ 
It is straightforward to see that 
\begin{align*}
C_{\rm HLT} (\lambda) &\le \lim_{|z| \to \infty} \frac{ \Big\langle \Psi_z, \sum_{i=1}^2 \Big( (-\Delta_{x_i})^{s}_{|\R^d} - \mathcal{C}_{s,d}|x_i|^{-2s} \Big) \Psi_z \Big\rangle}{\int_{\R^d}\rho_\Psi^{1+2s/d}} \\
&= \frac{\Big\langle u,  \Big( (-\Delta)^{s} - \mathcal{C}_{s,d}|x|^{-2s} \Big) u \Big\rangle + \Big\langle v, (-\Delta)^{s}   v \Big\rangle   }{\int_{\R^d} |u|^{2(1+2s/d)}+ \int_{\R^d}|v|^{2(1+2s/d)}  }.
\end{align*}
Then we replace $v$ by $v_\ell(x)=\ell^{d/2}v(\ell x)$. Note that
$$
\Big\langle v_\ell, (-\Delta)^{s}   v_\ell \Big\rangle = \ell^{2s} \Big\langle v, (-\Delta)^{s}  v \Big\rangle, \quad  \int_{\R^d}|v_\ell|^{2(1+2s/d)}  = \ell^{2s} \int_{\R^d}|v_\ell|^{2(1+2s/d)} .
$$
Therefore, by taking $\ell\to 0^+$ we obtain
\begin{align*}
C_{\rm HLT} (\lambda) &\le \lim_{\ell \to 0^+}  \frac{\Big\langle u,  \Big( (-\Delta)^{s} - \mathcal{C}_{s,d}|x|^{-2s} \Big) u \Big\rangle + \Big\langle v_\ell , (-\Delta)^{s}   v_\ell \Big\rangle   }{\int_{\R^d} |u|^{2(1+2s/d)}+ \int_{\R^d}|v_\ell|^{2(1+2s/d)}  } \\
&= \frac{\Big\langle u,  \Big( (-\Delta)^{s} - \mathcal{C}_{s,d}|x|^{-2s} \Big) u \Big\rangle   }{\int_{\R^d} |u|^{2(1+2s/d)}  }, \quad \forall u\in C_c^\infty(\R^d).
\end{align*}
Optimizing over $u$ we find that $C_{\rm HLT} (\lambda) \le C_{\rm HGN}$ for all $\lambda>0$. 

Thus we conclude that  $C_{\rm HLT} (\lambda) \to  C_{\rm HGN}$ as $\lambda\to \infty$. This completes the proof of Theorem \ref{thm-main2}. 
\end{proof}


\begin{thebibliography}{10}

\bibitem{Ad-75}
{\sc R.~A. Adams}, {\em Sobolev spaces}, Academic Press , New York-London,
  1975.
\newblock Pure and Applied Mathematics, Vol. 65.

\bibitem{BelFraVis-14}
	{\sc J.~Bellazzini, R.~L.~Frank, and N.~Visciglia}, 
	{\em Maximizers for Gagliardo-Nirenberg inequalities and related non-local problems}, 
	Math. Ann., 360 (2014), pp.~653--673.
	
	\bibitem{BelOzaVis-11}
	{\sc J.~{Bellazzini}, T.~{Ozawa}, and N.~{Visciglia}}, 
	{\em Ground states for semi-relativistic Schr{\"o}dinger-Poisson-Slater energies},
	Funkcialaj Ekvacioj, 60 (2017), pp. 353--369. doi: 10.1619/fesi.60.353



\bibitem{BL-00} {\sc R.D. Benguria and M. Loss}, {\em Connection between the Lieb--Thirring conjecture for Schr\"odinger operators and an isoperimetric problem for ovals on the plane},  in: Partial Differential Equations and Inverse Problems, in: Contemporary Mathematics, Vol. 362, Amer. Math. Soc., Providence, RI, 2004, pp. 53--61.

\bibitem{BVV18a} {\sc R.D. Benguria,  C. Vallejos, and H. Van Den Bosch}, {\em Gagliardo--Nirenberg--Sobolev inequalities for convex domains in $\R^d$}, Math. Res. Lett. 26 (2019), no. 5, 1291--1312. 

\bibitem{BVV18b}
\leavevmode\vrule height 2pt depth -1.6pt width 23pt, {\em Existence and non-existence of minimizers for Poincar\'e--Sobolev inequalities}, Calc. Var. PDE 59 (1)  (2020). https://doi.org/10.1007/s00526-019-1640-y



\bibitem{BT-05} {\sc A. Burchard and L.E. Thomas}, {\em On an isoperimetric inequality for a Schr\"odinger operator depending on the curvature of a loop}, J.  Geom.  Anal., 15 (2005), pp. 543--563. 

\bibitem{Daubechies-83} {\sc I. Daubechies}, {\em An uncertainty principle for fermions with generalized kinetic energy}, Commun. Math. Phys., 90 (1983), pp. 511--520.

\bibitem{DyLe-67}
{\sc F.~J. Dyson and A.~Lenard}, {\em Stability of matter. {I}}, J. Math.
  Phys., 8 (1967), pp.~423--434.
  
  \bibitem{DyLe-68}
{\sc A.~Lenard and F.~J. Dyson}, {\em Stability of matter. {II}}, J. Math.
  Phys., 9 (1968), pp.~698--711.
  
  \bibitem{EkhFra-06}
	{\sc T.~Ekholm and R.~L.~Frank},
	{\em On Lieb-Thirring inequalities for Schr\"odinger operators with virtual level},
	Commun. Math. Phys., 264 (2006), no. 3, pp.~725--740. 
  
   \bibitem{FGL-20}  {\sc R.L. Frank, D. Gontier and M. Lewin}, {\em The nonlinear Schr\"odinger equation for orthonormal functions: II. Application to Lieb--Thirring inequalities}, Preprint 2020 (arXiv:2002.04964).
  
  \bibitem{FHJN-19} {\sc R.L. Frank, D. Hundertmark, M. Jex and P.T. Nam}, {\em The Lieb-Thirring inequality revisited}, J. Eur. Math. Soc. (to appear), arXiv:1808.09017. 
  
 
 \bibitem{FraLieSei-08}
	{\sc R.~L.~Frank, E.~H.~Lieb and R.~Seiringer},
	{\em Hardy-Lieb-Thirring inequalities for fractional Schr{\"o}dinger operators},
	J. Amer. Math. Soc., 21 (2008), pp.~925--950.
	
	\bibitem{Frank-09}
	{\sc R.~L.~Frank},
	{\em A simple proof of Hardy-Lieb-Thirring inequalities},
	Commun. Math. Phys., 290 (2009), pp.~789--800.

\bibitem{FrSe-12}
{\sc R.~L. Frank and R.~Seiringer}, {\em Lieb-{T}hirring inequality for a model
  of particles with point interactions}, J. Math. Phys., 53 (2012), pp.~095201,
  11.
  
  \bibitem{Girardeau-60} {\sc M. Girardeau}, {\em Relationship between systems of impenetrable bosons and fermions in one dimension}, J.
Math. Phys. 1 (1960), 516--523.
  
  
\bibitem{HO2-77} {\sc M. Hoffmann-Ostenhof and T. Hoffmann-Ostenhof}, {\em Schr\"odinger inequalities and asymptotic
behavior of the electron density of atoms and molecules} , Phys. Rev. A, 16
(1977), pp. 1782--1785.
  
  
\bibitem{LaLu-18} {\sc S.~{Larson} and D.~{Lundholm}}, {\em Exclusion bounds for extended anyons}, Arch. Rational Mech. Anal. 227 (2018), pp. 309--365.


\bibitem{LaLuNa-19}
{\sc S.~{Larson}, D.~{Lundholm}, and P.~T. {Nam}}, {\em {Lieb-Thirring
  inequalities for wave functions vanishing on the diagonal set}}, Annales Henri Lebesgue (to appear), arXiv:1901.04963.

  
\bibitem{LLS-20} {\sc M. Lewin, E. H. Lieb and R. Seiringer}, {\em   The Local Density Approximation in Density Functional Theory}, Pure Appl. Analysis 2 (2020), pp.  35--73.

  
\bibitem{LiLo-01} {\sc E. H. Lieb and M. Loss}, {\em Analysis}, Second edition, Graduate Studies in Mathematics,
American Mathematical Society, Providence, RI, 2001.

%\bibitem{LiSe-10}
%{\sc E.~H. Lieb and R.~Seiringer}, {\em The stability of matter in quantum
%  mechanics}, Cambridge University Press, Cambridge, 2010.

\bibitem{LiTh-75}
{\sc E.~H. Lieb and W.~E. Thirring}, {\em {Bound for the Kinetic Energy of
  Fermions Which Proves the Stability of Matter}}, Phys.\ Rev.\ Lett., 35
  (1975), pp.~687--689.

\bibitem{LiTh-76}
\leavevmode\vrule height 2pt depth -1.6pt width 23pt, {\em {Inequalities for
  the Moments of the Eigenvalues of the Schrodinger Hamiltonian and their
  Relation to Sobolev Inequalities}},  (1976), pp.~205--239.
  
  
\bibitem{L-06} {\sc H. Linde}, {\em A lower bound for the ground state energy of a Schr\"odinger operator on a loop}, Proc. Amer. Math. Soc. 134 (2006), pp. 3629--3635. 

\bibitem{Lun-18} {\sc D.~Lundholm}, {\em Methods of modern mathematical physics: Uncertainty and exclusion principles in quantum mechanics}, arXiv:1805.03063. The last revision is available at \url{http://www.mathematik.uni-muenchen.de/~lundholm/methmmp.pdf}. 

\bibitem{LuNaPo-16}
{\sc D.~Lundholm, P.~T. Nam, and F.~Portmann}, {\em Fractional
  {H}ardy-{L}ieb-{T}hirring and related inequalities for interacting systems},
  Arch. Ration. Mech. Anal., 219 (2016), pp.~1343--1382.

\bibitem{LuPoSo-15}
{\sc D.~Lundholm, F.~Portmann, and J.~P. Solovej}, {\em Lieb-{T}hirring bounds
  for interacting {B}ose gases}, Comm. Math. Phys., 335 (2015), pp.~1019--1056.

\bibitem{LuSo-13}
{\sc D.~Lundholm and J.~P. Solovej}, {\em Hardy and {L}ieb-{T}hirring
  inequalities for anyons}, Comm. Math. Phys., 322 (2013), pp.~883--908.
  
  \bibitem{LuSo-13b}
\leavevmode\vrule height 2pt depth -1.6pt width 23pt, {\em Local exclusion principle for identical particles obeying intermediate and fractional statistics},
  APhys. Rev. A 88 (2013), 062106.

\bibitem{LuSo-14}
\leavevmode\vrule height 2pt depth -1.6pt width 23pt, {\em Local exclusion and
  {L}ieb-{T}hirring inequalities for intermediate and fractional statistics},
  Ann. Henri Poincar\'{e}, 15 (2014), pp.~1061--1107.
  
  \bibitem{LuSe-18} {\sc D.~Lundholm and R. Seiringer}, {\em   Fermionic behavior of ideal anyons}, Lett. Math. Phys. 108 (2018), pp. 2523--2541



  
\bibitem{Nam-18} {\sc P.T. Nam}, {\em Lieb--Thirring inequality with semiclassical constant and gradient error term}, J. Funct. Anal. 274 (2018), pp. 1739--1746.

\bibitem{SolSorSpi-10}
	{\sc J.~P.~Solovej, T.~\O .~S\o rensen and W.~L.~Spitzer},
	{\em Relativistic Scott correction for atoms and molecules},
	Comm. Pure and Applied Math., 63 (2010), pp.~39--118.
	
\bibitem{Tr-78}
{\sc H.~Triebel}, {\em Interpolation theory, function spaces, differential
  operators}, vol.~18 of North-Holland Mathematical Library, North-Holland
  Publishing Co., Amsterdam-New York, 1978.

\end{thebibliography}
\end{document}